\NewCommandCopy{\oldGamma}{\Gamma}\renewcommand{\Gamma}{{\mathit{\oldGamma}}}
\NewCommandCopy{\oldDelta}{\Delta}\renewcommand{\Delta}{{\mathit{\oldDelta}}}
\NewCommandCopy{\oldTheta}{\Theta}\renewcommand{\Theta}{{\mathit{\oldTheta}}}
\newcommand{\e}[1]{{\color{brickred}{#1}}}
\newcommand{\blank}{\mathord{\hspace{1pt}\text{--}\hspace{1pt}}} 
\newcommand{\Set}{\mathsf{Set}}
\newcommand{\C}{\mathcal{C}}
\newcommand{\ra}{\rightarrow}
\newcommand{\Ra}{\Rightarrow}
\newcommand{\id}{\mathsf{id}}
\renewcommand{\o}{0}
\renewcommand{\i}{1}
\renewcommand{\oe}{\e{0}}
\newcommand{\ie}{\e{1}}
\newcommand{\ke}{\e{k}}
\newcommand{\sym}{\mathsf{sym}}
\newcommand{\Con}{\mathsf{Con}}
\newcommand{\Sub}{\mathsf{Sub}}
\newcommand{\Ty}{\mathsf{Ty}}
\newcommand{\Tm}{\mathsf{Tm}}
\newcommand{\p}{\mathsf{p}}
\newcommand{\q}{\mathsf{q}}
\newcommand{\K}{\mathsf{K}}
\newcommand{\y}{\mathsf{y}}
\newcommand{\yl}{\mathsf{yl}}
\newcommand{\lam}{\mathsf{lam}}
\newcommand{\app}{\mathsf{app}}
\newcommand{\oldapp}{\mathbin{\$}}
\newcommand{\U}{\mathsf{U}}
\newcommand{\El}{\mathsf{El}}
\newcommand{\Eq}{\mathsf{Eq}}
\renewcommand{\c}{\mathsf{c}}
\renewcommand{\P}{\mathsf{P}}
\newcommand{\ext}{\mathop{\triangleright}}
\newcommand{\refl}{\mathsf{refl}}
\newcommand{\rel}{\mathsf{rel}}
\newcommand{\Id}{\mathsf{Id}}
\newcommand{\R}{\mathsf{R}}
\renewcommand{\Re}{\e{\mathsf{R}}}
\renewcommand{\tt}{\mathsf{tt}}
\newcommand{\suc}{\mathsf{suc}}
\renewcommand{\S}{\mathsf{S}}
\newcommand{\Se}{\e{\mathsf{S}}}
\newcommand{\Bool}{\mathsf{Bool}}
\newcommand{\true}{\mathsf{true}}
\newcommand{\false}{\mathsf{false}}
\newcommand{\ite}{\mathsf{ite}}
\newcommand{\foralld}{\mathord{{\forall}{\hspace{-0.15em}}\mathsf{d}}}
\newcommand{\ap}{\mathsf{ap}}
\newcommand{\apd}{\mathsf{apd}}
\newcommand{\unspan}{\mathsf{unspan}}
\newcommand{\unspane}{\e{\mathsf{unspan}}}
\renewcommand{\d}[1]{\mathord{#1{\hspace{-0.1em}}\mathsf{d}}}
\newcommand{\mkpi}{\mathsf{mk}{\forall}{\Pi}}
\newcommand{\mkpie}{\e{\mathsf{mk}{\forall}{\Pi}}}
\newcommand{\foralle}{\mathord{\e{\forall}}}
\newcommand{\cur}{\urcorner}
\newcommand{\cul}{\ulcorner}
\newcommand{\inl}{\mathsf{inl}}
\newcommand{\inr}{\mathsf{inr}}
\newcommand{\PSh}{\mathsf{PSh}}
\newcommand{\G}{\mathsf{G}}
\newcommand{\Syn}{\mathsf{Syn}}
\renewcommand{\ll}{\cul}
\newcommand{\rr}{\cur}
\definecolor{brickred}{rgb}{0.8, 0.25, 0.33}
\begin{document}

\newtheorem{problem}[theorem]{Problem}
\theoremstyle{remark}
\newtheorem{construction}[theorem]{Construction}

\title{Internal Parametricity, without an Interval}

\author{Thorsten Altenkirch}
\orcid{}
\affiliation{%
  \institution{University of Nottingham}
  \city{}
  \country{UK}
}
\email{thorsten.altenkirch@nottingham.ac.uk}

\author{Yorgo Chamoun}
\orcid{}
\affiliation{%
  \institution{École Polytechnique}
  \city{}
  \country{France}
}
\email{yorgo.chamoun@polytechnique.edu}

\author{Ambrus Kaposi}
\orcid{}
\affiliation{%
  \institution{Eötvös Loránd University}
  \city{}
  \country{Hungary}
}
\email{akaposi@inf.elte.hu}

\author{Michael Shulman}
\orcid{}
\affiliation{%
  \institution{University of San Diego}
  \city{}
  \country{USA}
}
\email{shulman@sandiego.edu}

\begin{abstract}
  Parametricity is a property of the syntax of type theory implying,
  e.g., that there is only one function having the type of the
  polymorphic identity function. Parametricity is usually proven
  externally, and does not hold internally. Internalising it is
  difficult because once there is a term witnessing parametricity, it
  also has to be parametric itself and this results in the appearance
  of higher dimensional cubes. In previous theories with internal
  parametricity, either an explicit syntax for higher cubes is present
  or the theory is extended with a new sort for the interval. In this
  paper we present a type theory with internal parametricity which is
  a simple extension of Martin-Löf type theory: there are a few new
  type formers, term formers and equations. Geometry is not explicit
  in this syntax, but emergent: the new operations and equations only
  refer to objects up to dimension 3. We show that this theory is
  modelled by presheaves over the BCH cube category. Fibrancy
  conditions are not needed because we use span-based rather than
  relational parametricity. We define a gluing model for this theory
  implying that external parametricity and canonicity hold. The theory
  can be seen as a special case of a new kind of modal type theory,
  and it is the simplest setting in which the computational properties
  of higher observational type theory can be demonstrated.
\end{abstract}

\begin{CCSXML}
<ccs2012>
   <concept>
       <concept_id>10003752.10003790.10011740</concept_id>
       <concept_desc>Theory of computation~Type theory</concept_desc>
       <concept_significance>500</concept_significance>
       </concept>
 </ccs2012>
\end{CCSXML}

\ccsdesc[500]{Theory of computation~Type theory}

\keywords{homotopy type theory, parametricity, logical relations, gluing}

\maketitle

\section{Introduction}

Parametricity was introduced by Reynolds \cite{DBLP:conf/ifip/Reynolds83} as a theory of
representation-indep\-endence for the polymorphic lambda calculus. The
idea is that a polymorphic function has to work uniformly on all
types, i.e., it cannot inspect its type arguments, and thus for example
there are zero, one and two terms of types $\forall a.a$, $\forall a.a
\ra a$ and $\forall a.a\ra a\ra a$, respectively. This intuition is
formalised by relation-preservation: each type is equipped with a
relation (called logical relation), and one can prove by induction on
the syntax that every term respects the relation corresponding to its
type (called the fundamental lemma). Dependent types are expressive
enough that they can formulate their own parametricity relations. This
was used by \cite{DBLP:conf/icfp/BernardyJP10} to define a parametricity
translation for type theory. We describe this translation below.
We expect that the reader is familiar with the syntax of type theory.

\paragraph{The external parametricity translation}

By mutual induction on syntactic contexts, substitutions, types and
terms, we define the following $\blank^\P$ operations. On contexts, we
further define operations $\o$ and $\i$ and we write $k$ when we mean
either. On substitutions we mutually prove an equation.
\[\arraycolsep=1.4pt
\infer{\begin{array}{l l}\Gamma^\P & : \Con \\ k_\Gamma & :\Sub\,\Gamma^\P\,\Gamma \end{array}}{\Gamma : \Con} \hspace{1.4em}
\infer{\begin{array}{l}\sigma^\P : \Sub\,\Delta^\P\,\Gamma^\P \\ \sigma\circ k_\Delta = k_\Gamma\circ\sigma^\P \\ \end{array}}{\sigma:\Sub\,\Delta\,\Gamma}\hspace{1.4em}
\infer{\begin{array}{l}A^\P : \Ty\,(\Gamma^\P,A[\o_\Gamma],A[\i_\Gamma]) \\ \phantom{k_\Gamma} \end{array}}{A : \Ty\,\Gamma} \hspace{1.4em}
\infer{\begin{array}{l}t^\P : \Tm\,\Gamma^\P\,(A^\P[t[\o_\Gamma], t[\i_\Gamma]]) \\ \phantom{k_\Gamma}\end{array}}{t : \Tm\,\Gamma\,A}
\]
The main point of this translation is to compute the logical relation
$A^\P$ from a type $A$ (the third operation). For a closed type $A :
\Ty\,\diamond$, we obtain a homogeneous binary relation $A^\P :
\Ty(\diamond,A,A)$, that is, a type depending on two variables of type
$A$. For an $A$ in a non-empty context, $A^\P$ is a heterogeneous
relation which depends on the operations for contexts. Because types
can include terms, we need to define $\blank^\P$ mutually on terms and
substitutions as well. We explain how these operations are defined for each
sort.
\begin{itemize}
\item $\Gamma^\P$ is a context that contains two copies of each type
  in $\Gamma$ together with witnesses of their relatedness. The empty
  context $\diamond$ stays empty. For a context ending with $A$, we
  obtain two copies of $A$ which are substituted by $\o$ and $\i$,
  respectively; finally we have a witness of relatedness. The
  projections $\o$ and $\i$ return the $x_\o$ and $x_\i$
  components, respectively.
\begin{alignat*}{10}
  & \diamond^\P && {}:= \diamond \hspace{7em} && (\Gamma,x:A)^\P && {}:= \Gamma^\P,x_0:A[\o_\Gamma],x_1:A[\i_\Gamma],x_2:A^\P[x_0,x_1] \\
  & k_\diamond && {}:= \epsilon && k_{\Gamma,x:A} && {}:= (k_\Gamma,x\mapsto x_k)
\end{alignat*}
\item A substitution is a list of terms, each variable $x$ in the
  codomain context is mapped to some term $t$ which we denote
  $x\mapsto t$. On the empty substitution $\blank^\P$ is the identity,
  on a substitution into an extended context
  $(\sigma,x\mapsto t) : \Sub\,\Delta\,(\Gamma,x:A)$, it is defined
  pointwise.
\begin{alignat*}{10}
  & \epsilon^\P && {}:= \epsilon \hspace{7em} && (\sigma,x\mapsto t)^\P && {}:= (\sigma^\P,x_0\mapsto t[\o_\Delta], x_1\mapsto t[\i_\Delta], x_2\mapsto t^\P)
\end{alignat*}
\item $A^\P$ is a heterogeneous relation between two different copies
  of $A$, the dependencies of which are given by $_0$ and $_1$
  components, respectively. For example,
  \[
  (\diamond,x:A,y:B)^\P = \diamond,x_0:A,x_1:A,x_2:A^\P[x_0,x_1],y_0:B[x_0],y_1:B[x_1],y_2:B^\P[y_0,y_1].
  \]
  It is defined separately for each type $A$. On the universe,
  $\blank^\P$ returns the relation space. On $\El$, it returns the
  type of witnesses of the relation using function application
  $\blank\oldapp\blank$. Both $\U$ and $\El\,a$ are types, $a$ is a
  term of type $\U$.
  \[
  \U^\P[a_0,a_1] := \El\,a_0\Ra\El\,a_1\Ra\U \hspace{5em} (\El\,a)^\P[x_0,x_1] := \El\,(a^\P\oldapp x_0\oldapp x_1)
  \]
\item The term $t^\P$ says that $t$ respects logical relations: if the
  relations hold for every dependency in the context, the relation
  $A^\P$ also holds for the two copies of $t$, depending on the
  respective copies of $\Gamma$. Sometimes $t^\P$ is called the
  fundamental lemma for the term $t$. Note that when we say
  ``relation'' we always mean proof-relevant relation (correspondence,
  or family of types).
\end{itemize}

\paragraph{The ``hello world'' example of parametricity}

A term in $\Tm\,(\diamond,x:\U,y:\El\,x)\,(\El\,x)$ can only contain
two free variables, $x$ and $y$. Using the translation $\blank^\P$, we
show that for any such term $t$ and closed terms $b$ and $u$, the
substituted term $t[b,u]$ is equal to $u$. This is one way to
formalise that $\forall a.a\ra a$ has only one element. In fact, the
unary version of the translation is enough, so for this example we
restrict ourself to the $k=\o$ case and omit the $_1$ components
(alternatively, we could fill the $_\i$ components using dummy $\top$
arguments). Now $t^\P$ says that if there is a code for a type $x_0$,
a predicate on elements of this type, and an element $y_0$ for which
the predicate holds, then the predicate will also hold for
$t[x_0,y_0]$.
\[
t^\P : \Tm\,\big(\diamond,x_0:\U,x_2:\El\,x_0\Ra\U,y_0:\El\,x_0,y_2:\El\,(x_2\oldapp y_0)\big) \big(\El\,(x_2\oldapp (t[x_0,y_0]))\big)
\]
Given a closed type $b : \Tm\,\diamond\,\U$, a term $u :
\Tm\,\diamond\,(\El\,b)$, we define a predicate $Q\oldapp y :=
\Eq_{\El\,b}\,u\,y$ expressing equality to $u$. As $Q$ holds for $u$
by reflexivity, we obtain the following substituted term with the
desired type.
\[
t^\P[b,Q,u,\refl] : \Tm\,\diamond\,(\Eq_{\El\,b}\,u\,(t[b,u]))
\]

The operation $\blank^\P$ can be defined for most well-behaved type
theories as a syntactic translation or model construction.  However, it
only gives parametricity in the empty context. In the above example if
we replace the empty context $\diamond$ by an arbitrary $\Gamma$, we
only obtain
\[
\Tm\,\Gamma^\P\,(\Eq_{\El\,b}\,u\,(t[b,u])),
\]
which expresses the same equality, but it is only valid in a different
(larger) context $\Gamma^\P$ which includes the information that the
elements in $\Gamma$ are themselves parametric. Operations that are
internal to type theory (such as $\lambda$ or application $\oldapp$) do
not act on the full context, as can be seen by looking at their
inference rules. This is in contrast with $\blank^\P$ which takes a
term into a completely different context. This is why $\blank^\P$ is
called an \emph{external} parametricity translation.

\paragraph{Internalising parametricity}

Internal parametricity can be obtained by postulating a substitution
$\R_\Gamma : \Sub\,\Gamma\,\Gamma^\P$ for every context $\Gamma$
\cite{DBLP:conf/types/AltenkirchK15}. Now given a $t :
\Tm\,(\Gamma,x:\U,y:\El\,x)\,(\El\,x)$, and $b$, $u$, $Q$ as above we
have
\[
t^\P[\R_\Gamma,b,Q,u,\refl] : \Tm\,\Gamma\,(\Eq_{\El\,b}\,u\,(t[b,u])),
\]
thus we obtain the desired equality in the same context as $t$.

However there is no hope of being able to define the substitution
$\R_\Gamma$ by induction on the syntax. Type theory has
non-parametric models in which the above equality does not hold for any
$t$: e.g., models with excluded middle \cite{DBLP:conf/types/BooijELS16}
or a type-case operator \cite{DBLP:conf/cpp/BoulierPT17}.

So it is not a surprise that when trying to define $\R_\Gamma$ by
induction on the context $\Gamma$, we need to extend the syntax by a
new operator which we call $\rel$:
\[
\R_\diamond := \epsilon \hspace{4em}
\R_{\Gamma,x:A} := (\R_\Gamma, x_0\mapsto x, x_1\mapsto x, x_2\mapsto \rel\,x) \hspace{4em}
\infer{\rel\,a :\Tm\,\Gamma\,(A^\P[\R_\Gamma,a,a])}{a : \Tm\,\Gamma\,A}
\]
Once we introduce new terms (such as $\rel$), we have to say how
$\blank^\P$ acts on them, and it is not clear how to do this. A
solution is to turn the $\blank^\P$ operations into operators of the
syntax and their definitions into conversion rules. Such a $\blank^\P$
relation behaves like an identity type that is reflexive and a
congruence, but has no transport (it is sometimes called a Bridge
type). Before defining our syntax with a Bridge type, we take a
detour to understand how iterated usages of $\blank^\P$ behave.

\paragraph{Higher cubes in external parametricity}

$\Gamma^\P$ can be seen as a context of lines, $(\Gamma^\P)^\P$ as a
context of squares, $((\Gamma^\P)^\P)^\P$ as a context of
three-dimensional cubes, and so on. We illustrate this by computing
the contents of $(\diamond,x:A)$ after applying $\blank^\P$ to it twice.
\begin{alignat*}{10}
  & {(\diamond,x:A)^\P}^\P && {}={} && \rlap{$(\diamond,x_0:A,x_1:A,x_2:A^\P[x_0,x_1])^\P =$} \\
  & && && \big(\diamond,{} && x_{00}:A,{} && x_{01}:A,{} && x_{02}:A^\P[x_{00},x_{01}], \\
  & && &&             && x_{10}:A,{} && x_{11}:A,{} && x_{12}:A^\P[x_{10},x_{11}], \\
  & && &&             && x_{20}:A^\P[x_{00},x_{10}],{} && x_{21}:A^\P[x_{01},x_{11}],{} && x_{22}:{A^\P}^\P[x_{00},x_{01},x_{02},x_{10},x_{11},x_{12},x_{20},x_{21}]\big)
\end{alignat*}
The contexts $(\diamond,x:A)$, $(\diamond,x:A)^\P$,
${(\diamond,x:A)^\P}^\P$, ${{(\diamond,x:A)^\P}^\P}^\P$ can be depicted as follows. \\
\begin{center}
    \begin{minipage}{0.25\textwidth}
        \begin{tikzcd}
            x\;\;\;\;\;\;\;\; x_0 \arrow[r, "x_2"] & x_1
        \end{tikzcd}
    \end{minipage}
    \begin{minipage}{0.25\textwidth}
        \begin{tikzcd}[row sep=large]
            x_{01} \arrow[r, "x_{21}"{name=r2}] & x_{11} \\
            x_{00} \arrow[u, "x_{02}"] \arrow[r, "x_{20}"'{name=r3}] & x_{10} \arrow[u, "x_{12}"']
        \arrow[from=r2, to=r3, phantom, "{\scriptstyle x_{22}}" description]
        \end{tikzcd}
    \end{minipage}
    \begin{minipage}{0.45\textwidth}
        \begin{tikzcd}[row sep=large]
        x_{010} \arrow[rrr, "x_{210}"{name=r1, above}]\arrow[dr, "x_{012}"'] &                                     &              &  x_{110} \arrow[dl, "x_{110}"]\\
                                & x_{011} \arrow[r, "x_{211}"{name=r2}]                                            & x_{111}      &  \\
                                & x_{001} \arrow[r, "x_{210}"{name=r3, below}]\arrow[u, "x_{021}"{name=u2}] & x_{101} \arrow[u, "x_{121}"{name=u3, right}]  &  \\
        x_{000} \arrow[rrr, "x_{200}"{name=r4, below}]\arrow[ur, "x_{001}"]\arrow[uuu, "x_{020}"{name=u1}]  &          &              & x_{100} \arrow[uuu, "x_{120}"{name=u4, right}] \arrow[ul, "x_{102}"']
        \arrow[from=r1, to=r2, phantom, "{\scriptstyle x_{121}}" description]
        \arrow[from=r2, to=r3, phantom, "{\scriptstyle x_{220}}" description]
        \arrow[from=r3, to=r4, phantom, "{\scriptstyle x_{202}}" description]
        \arrow[from=u1, to=u2, phantom, "{\scriptstyle x_{022}}" description]
        \arrow[from=u3, to=u4, phantom, "{\scriptstyle x_{122}}" description]
        \end{tikzcd}
    \end{minipage}
\end{center}
In the last diagram, the filler for the biggest square is $x_{221}$ and the filler for the cube is $x_{222}$.

\paragraph{Degenerating a line}

Using our newly developed geometric intuition, we explain how two
substitutions of type 
$\Sub\,\Gamma^\P\,(\Gamma^\P)^\P$, namely
$\R_{\Gamma^\P}$ and $(\R_\Gamma)^\P$, differ. They correspond to the
two different ways of turning a line into a square: given
$(\diamond,x:A)^\P$, $\R_{(\diamond,x:A)^\P}$ produces the square on
the left, $(\R_{\diamond,x:A})^\P$ produces the square on the right. \\
\begin{center}
    \begin{minipage}{0.5\textwidth}
        \begin{center}
            \begin{tikzcd}[row sep=large]
            x_0 \arrow[r, "x_2"] & x_1\\
            x_0 \arrow[u, "{\scriptstyle \rel\;x_0}", equal] \arrow[r, "x_2"'] \arrow[ur, phantom, "{\scriptstyle \rel\;x_2}" description] & x_1 \arrow[u, "\rel\;x_1"', equal]
            \end{tikzcd}
        \end{center}   
    \end{minipage} \hfill
    \begin{minipage}{0.45\textwidth}
        \begin{tikzcd}[row sep=large]
        x_1 \arrow[r, "\rel\;x_1", equal] & x_1\\
        x_0 \arrow[u, "x_2"] \arrow[r, "\rel\;x_0"', equal] \arrow[ur, phantom, "{\scriptstyle (\rel\;x)^P}" description] & x_0 \arrow[u, "x_2"']
        \end{tikzcd}   
    \end{minipage} \hfill
\end{center}
Assuming $\R_{\Gamma^\P} = (\R_\Gamma)^\P$ is incompatible with
injectivity of $\Pi$ (an analogous observation was made by
\cite[p.\ 138]{DBLP:conf/lics/BernardyM12}). To explain this, we need to
know how $\blank^\P$ acts on $\Pi$ types:
\begin{alignat*}{10}
  & (\Pi(x:A).B)^\P[\gamma_P,f_0,f_1] := \\
  & \hspace{2em}\Pi(x_0:A[\o_\Gamma\circ\gamma_P], x_1:A[\i_\Gamma\circ\gamma_P], x_2:A^\P[\gamma_P,x_0,x_1]).B^\P[\gamma_P,x_0,x_1,x_2,f_0\oldapp x_0, f_1\oldapp x_1]
\end{alignat*}
Now we can see that ${\U^\P}^\P[a_{00}, \dots, a_{21}]$ is the type
of two-dimensional relations, parameterised by four codes
$a_{00}$, $a_{01}$, $a_{10}$, $a_{11}$ in $\U$ and four relations
$a_{02}$, $a_{12}$, $a_{20}$, $a_{21}$ between them. We compute as follows.
\begin{alignat*}{20}
  & \rlap{${\U^\P}^\P[a_{00}, \dots, a_{21}] =$} \\
  & \rlap{${\U^\P[a_0,a_1]}^\P[a_{00}, \dots, a_{21}] =$} \\
  & \rlap{$(\El\,a_0\Ra\El\,a_1\Ra\U)^\P[a_{00}, \dots, a_{21}] =$} \\
  & \Pi\Big(&& x_{00} && {}:{} && \El\,a_{00},{} && x_{01} && {}:{} && \El\,a_{01},{} && x_{02} && {}:{} && \El\,(a_{02}\oldapp x_{00}\oldapp x_{01}), \\
  & && x_{10} && {}:{} && \El\,a_{10},{} && x_{11} && {}:{} && \El\,a_{11},{} && x_{12} && {}:{} && \El\,(a_{12}\oldapp x_{10}\oldapp x_{11})\Big).\U^\P[a_{20}\oldapp x_{00}\oldapp x_{10}, a_{21}\oldapp x_{01}\oldapp x_{11}] = \\
  & \Pi\Big(&& x_{00} && {}:{} && \El\,a_{00},{} && x_{01} && {}:{} && \El\,a_{01},{} && x_{02} && {}:{} && \El\,(a_{02}\oldapp x_{00}\oldapp x_{01}), \\
  & && x_{10} && {}:{} && \El\,a_{10},{} && x_{11} && {}:{} && \El\,a_{11},{} && x_{12} && {}:{} && \El\,(a_{12}\oldapp x_{10}\oldapp x_{11})\Big).\El\,(a_{20}\oldapp x_{00}\oldapp x_{10})\Ra\El\,(a_{21}\oldapp x_{01}\oldapp x_{11})\Ra\U
\end{alignat*}
Assuming $\R_{(\diamond,a:\U)^\P} = (\R_{\diamond,a:\U})^\P$, we also
have ${\U^\P}^\P[\R_{(\diamond,a:\U)^\P}] =
{\U^\P}^\P[(\R_{\diamond,a:\U})^\P]$, but the first one is a type of
the form $\Pi(x_{00}:\El\,a_{0}, x_{01} :\El\,a_{0}\dots).\dots$, the
second one is a type of the form $\Pi(x_{00}:\El\,a_{0}, x_{01}
:\El\,a_{1}\dots).\dots$. If $\Pi$ has injectivity (which follows from
normalisation), then for any $a_0$, $a_1$ in $\U$, $\El\,a_0 =
\El\,a_1$.

\paragraph{Symmetry and emergent geometry}

In a syntax for internal parametricity, we either need to postulate
the existence of an infinite hierarchy of $\rel$s from which the
substitutions $\R_\Gamma$, $(\R_\Gamma)^\P$, ${(\R_\Gamma)^\P}^\P$,
$\dots$ can be obtained, or we need to provide another way to relate
$\R_{\Gamma^\P}$ and $(\R_\Gamma)^\P$. We choose the latter:%
\footnote{In fact, it is not clear whether the former is even
possible.  The naive presheaf model of such a theory does not satisfy
the needed computation rule for $\blank^\P$ on $\Pi$, and the syntax
has stuck terms that it is not clear how to compute with.}  we
introduce a new substitution $\S_\Gamma$ (called symmetry) which
satisfies $\S_\Gamma\circ\R_{\Gamma^\P} =
(\R_\Gamma)^\P$. Intuitively, symmetry maps $x_{ij}$ to $x_{ji}$.
\begin{center}
\begin{tikzpicture}
\node (a00)  at (0,0) {$x_{00}$};
\node (a01)  at (0,1) {$x_{01}$};
\node (a10)  at (1.5,0) {$x_{10}$};
\node (a11)  at (1.5,1) {$x_{11}$};
\node (a22)  at (0.75,0.5) {$x_{22}$};
\draw[->] (a00) edge node[left] {$x_{02}$} (a01);
\draw[->] (a00) edge node[below] {$x_{20}$} (a10);
\draw[->] (a10) edge node[right] {$x_{12}$} (a11);
\draw[->] (a01) edge node[above] {$x_{21}$} (a11);
\node (b1) at (2.5,0.5) {};
\node (b2) at (3.5,0.5) {};
\draw[|->] (b1) edge node[above] {$\S$} (b2);
\begin{scope}[shift={(4.5,0)}]
\node (a00)  at (0,0) {$x_{00}$};
\node (a01)  at (0,1) {$x_{10}$};
\node (a10)  at (1.5,0) {$x_{01}$};
\node (a11)  at (1.5,1) {$x_{11}$};
\node (a22)  at (0.75,0.5) {$\mathsf{sym}\,x_{22}$};
\draw[->] (a00) edge node[left] {$x_{20}$} (a01);
\draw[->] (a00) edge node[below] {$x_{02}$} (a10);
\draw[->] (a10) edge node[right] {$x_{21}$} (a11);
\draw[->] (a01) edge node[above] {$x_{12}$} (a11);
\end{scope}
\end{tikzpicture}
\end{center}
It turns out that this operation is enough, and there is no need to
introduce higher dimensional versions of $\S_\Gamma$ (as in
\cite{DBLP:conf/lics/BernardyM12}) or an extra sort of intervals (as
in \cite{DBLP:journals/entcs/BernardyCM15,DBLP:phd/us/Cavallo21}). In
this paper we define a theory with internal parametricity which does
not have explicit geometry in the syntax. Compared to previous
theories with internal parametricity, geometry is \emph{emergent}
rather than explicitly built-in. We do have ways to talk about higher
dimensional cubes (as we saw when iterating $\blank^\P$ on contexts)
but this is nothing special: Martin-Löf type theory also has all
higher dimensional cubes simply because the identity type can be
iterated. E.g.\ the type
$\Id_{(\Id_A\,a_{01}\,a_{11})}\,\big(\mathsf{transport}_{(\Id_A\,a_{01}\,\blank)}\,a_{12}\,(\mathsf{transport}_{(\Id_A\,\blank\,a_{10})}\,a_{02}\,a_{20})\big)\,a_{21}$
expresses the type of fillers of the following two-dimensional square.
\begin{center}
\begin{tikzpicture}
\node (a00)  at (0,0) {$a_{00}$};
\node (a01)  at (0,1) {$a_{01}$};
\node (a10)  at (1,0) {$a_{10}$};
\node (a11)  at (1,1) {$a_{11}$};
\draw[->] (a00) edge node[left] {$a_{02}$} (a01);
\draw[->] (a00) edge node[below] {$a_{20}$} (a10);
\draw[->] (a10) edge node[right] {$a_{12}$} (a11);
\draw[->] (a01) edge node[above] {$a_{21}$} (a11);
\end{tikzpicture}
\end{center}
We have one two-dimensional operation ($\S_\Gamma$) and one equation
about $\S_\Gamma$ that involes three dimensional cubes, but we never
mention anything higher than that.

\paragraph{Obtaining a theory from a model}

Even if higher cubes are not explicitly built into our syntax, our
type theory is informed by an analysis of the cubical set model built on
Bezem-Coquand-Huber (BCH) cubes \cite{DBLP:conf/types/BezemCH13}.

The BCH cube category can be presented using a finite number of basic
operators and equations between them. It is given by the free
2-category generated by the diagram on the left in Figure
\ref{fig:cubecat} and five equations relating the 2-cells.
\begin{figure}
\begin{tikzcd}[row sep=7 em]
    * \\
    *\arrow[u, bend left=100, "\suc"{name=suc}] \arrow[u, bend right=5, "\id"{name=id, right}] \arrow[u, bend right=100, "\;"{name=S, right}, "\suc\circ \suc"' near start]
    \arrow[from=S, to=S, loop right, distance=3 em, start anchor={[yshift=2ex]west}, end anchor={[yshift=-2ex]west}, shorten=1 mm, "\S", Rightarrow]
    \arrow[from=suc, to=id, shift left=4 ex, shorten=2 mm, "\R", Rightarrow]
    \arrow[from=id, to=suc, shift left=5 ex, shorten=2 mm, "\o", Rightarrow]
    \arrow[from=id, to=suc, shift left=3 ex, shorten=2 mm, "\i"', Rightarrow]
\end{tikzcd}\hspace{5em}
\begin{tikzcd}[row sep=7 em]
    \PSh(\square)
    \arrow[d, bend right=100, "\foralle"{name=suc,left}]
    \arrow[d, bend left=5, "\mathsf{id}"{name=id, right}]
    \arrow[d, bend left=100, "\;"{name=S, right}, "\foralle\circ\foralle" near end] \\
    \PSh(\square)
    \arrow[from=S, to=S, loop right, distance=3 em, start anchor={[yshift=2ex]west}, end anchor={[yshift=-2ex]west}, shorten=1 mm, "\Se", Leftarrow]
    \arrow[from=id, to=suc, shift right=4 ex, shorten=2 mm, "\Re"', Rightarrow]
    \arrow[from=suc, to=id, shift right=5 ex, shorten=2 mm, "\e{\o}"', Rightarrow]
    \arrow[from=suc, to=id, shift right=3 ex, shorten=2 mm, "\e{\i}", Rightarrow]
\end{tikzcd}
\caption{The generating 1-cells and 2-cells of the BCH cube category
  $\square$ as a 2-category (left). Structure on presheaves over this
  category (right).
  The two occurrences of $*$ represent the same object, depicted twice solely for presentation, and the same holds for $\PSh(\square)$.}
\label{fig:cubecat}
\end{figure}
This 2-category has one 0-cell $*$, and can be seen as a 1-category
where objects are given by 1-cells from $*$ to itself, and morphisms
are given by the 2-cells. In this presentation we have numbered
dimensions instead of named dimensions (see
\cite{DBLP:conf/RelMiCS/BuchholtzM17} for a comparison of different
presentations).  The objects of the cube category are natural numbers
given by $0 = \mathsf{id}$, $1 = \mathsf{suc}\circ\mathsf{id}$, $2 =
\mathsf{suc}\circ\mathsf{suc}\circ\mathsf{id}$, and so
on. Degeneracies are generated by $\R$, e.g.\ there is one map $\R$
from $1$ to $0$, there are two maps from $2$ to $1$, namely
$\id_{\suc}\bullet\R$ and $\R\bullet\id_{\suc}$ (where
$\blank\bullet\blank$ denotes horizontal composition). Face maps are
similarly generated by $\o$ and $\i$, and there is a symmetry
operation $\S$.

The category of presheaves over the BCH cube category supports the exact same structure with
maps in the other direction (diagram on the right in Figure
\ref{fig:cubecat}). Here $\foralle$ is precomposition by
$\mathsf{suc}$, and the natural transformations are named after their
generating 2-cells. This picture can be reified into new operations of
the type theory: we add a strict morphism from the syntax to itself
corresponding to $\foralle$, natural transformations $\Re$, $\oe$,
$\ie$ and $\Se$, and the five extra equations. The new equations
and the other operators that we add are those which are justified by this
presheaf model: the morphism $\foralle$
strictly respects the substitution calculus (the category with families, CwF), $\top$, $\Sigma$, strict
identity $\Eq$, $\Bool$ and $\K$. Our model does
not justify an equation such as $\foralle(\Pi\,A\,B) = \Sigma\big(\Pi\,(\foralle A)\,(\foralle B)\big)\dots$,
but only the analogous isomorphism, so for $\Pi$ we add new operators and equations expressing this isomorphism.
Similarly, $\foralle\U$ is described by a section-retraction pair.
We call the collection of these new operations the \emph{global theory}, as it
involves operations on contexts and substitutions.
We directly obtained the global theory from the presheaf model over BCH cubes,
thus it is immediately justified by this model.

\paragraph{Local theory}

Multi-modal type theory \cite{DBLP:journals/lmcs/GratzerKNB21} gives a
generic way to construct a type theory from a CwF morphism such as $\suc$. It uses
that precomposition $\foralle = \suc^*$ has a left adjoint
$\suc_{!}$ (the left Kan extension), and this has a dependent right
adjoint. However this theory is still non-local as it introduces extra
(un)lock operations on contexts. A presentation of internal
parametricity using this method is \cite{DBLP:phd/us/Cavallo21}, where the
lock operation of multi-modal type theory becomes context extension with
an interval variable.

In our case we can define a version of our theory with only
local operations, that is, operations that do not change the
context. This is what we call the \emph{local theory}. It is specified
by the exact same data as the global theory, but now $\forall$ is not
a morphism from the syntax to itself, but a morphism from the standard
model to itself, internal to presheaves over the syntax. We explain
this in detail.

Any category of presheaves has a type-theoretic internal language. For example, when we write $A :
\Set$ in this internal language, externally this means that $A$ is a
presheaf. Similarly, the internal $B : A \ra \Set$ means a dependent
presheaf over $A$ externally. If the base category $\C$ of the presheaf
model is not only a category, but a CwF, then internally we have $\Ty : \Set$ and $\Tm :
\Ty\ra\Set$ which externally are defined by the presheaf of types and the
dependent presheaf of terms in $\C$. This is the main idea of
two-level type theory \cite{DBLP:conf/csl/AltenkirchCK16,DBLP:journals/corr/AnnenkovCK17}.

If $\C$ has $\Sigma$-types, then internally we have $\Sigma : (A :
\Ty)\ra(\Tm\,A\ra\Ty)\ra\Ty$ together with an isomorphism
$(a:\Tm\,A)\times\Tm\,(B\,a) \cong \Tm(\Sigma\,A\,B)$. In this case
(still internally), $\Ty$ and $\Tm$ form a universe closed under $\Sigma$
types. \cite{DBLP:conf/fscd/BocquetKS23} call this a higher-order
model of type theory with $\Sigma$ types. The situation is analogous
for other type formers, e.g., if $\C$ has $\Pi$-types, then internally we
have $\Pi : (A : \Ty)\ra(\Tm\,A\ra\Ty)\ra\Ty$ together with an
isomorphism $((a:\Tm\,A)\ra\Tm\,(B\,a))\cong\Tm\,(\Pi\,A\,B)$.

Now we define a CwF internal to presheaves over $\C$. We call this the
internal standard model. Contexts in this model are given by $\Ty$, a
type in a context $\Omega$ is a function $\Tm\,\Omega\ra\Ty$, a term
in context $\Omega$ of type $A$ is a dependent function
$(\omega:\Tm\,\Omega)\ra\Tm(A\,\omega)$. Context extension is given by
$\Sigma$, hence we need that $\Ty$ is closed under $\Sigma$, which in
turn needs that $\C$ has $\Sigma$ types. This standard model is a
generalisation of the set model (or type model, or metacircular model)
\cite{DBLP:conf/popl/AltenkirchK16} and is a variant of the telescopic
contextualisation of \cite{DBLP:conf/fscd/BocquetKS23}.

The $\forall$ of the local theory is specified by a CwF-morphism from
this standard model to itself. Internally written, it maps contexts to
contexts ($\forall:\Ty\ra\Ty$), types to types ($\foralld :
(\Tm\,A\ra\Ty)\ra\Tm\,(\forall A)\ra\Ty$), and so on. We add a ``d''
suffix to the operation on types to distinguish from the one on
contexts. Externally, these operations are natural tranformations
described as follows.
\[
\infer{\forall A : \Ty\,\Gamma}{A : \Ty\,\Gamma} \hspace{2em} \infer{(\forall A)[\sigma] = \forall(A[\sigma])}{} \hspace{2em} \infer{\foralld B : \Ty\,(\Gamma,x:\forall A)}{B:\Ty\,(\Gamma,x:A)} \hspace{2em} \infer{(\foralld B)[\sigma] = \foralld(B[\sigma\uparrow])}{}
\]
We obtain all of the local theory this way: we start with a strictly
democratic CwF $\C$ with $\top$, $\Sigma$, $\Eq$, $\Pi$, $\U$ and
$\Bool$ (we call this the core theory). Internally to presheaves over
$\C$, we have the standard model of this core theory. Now $\forall$,
$\foralld$ and the other new operations and equations providing
internal parametricity are specified by a core theory morphism from
this standard model to itself. Just as in the case of the global
theory, this morphism respects the CwF structure, $\top$, $\Sigma$,
$\Eq$, $\K$ and $\Bool$ strictly, $\Pi$ up to an isomorphism, and $\U$ up to
section-retraction. Note that this only gives a specification of the
local theory, and does not directly provide a model of it. We justify
the local theory by deriving its syntax from the syntax of the global
theory which we localise using $\Re$.

The local theory is truly local: it does not mention contexts and it can
be described as a second-order generalised algebraic theory (SOGAT)
\cite{DBLP:journals/corr/abs-1904-04097}. From this SOGAT we obtain a
first-order GAT in a way that makes sure that all operations are
stable under substitution. As far as we know, our local theory is the
first non-substructural type theory describing presheaves over BCH
cubes. We distinguish the corresponding operations of the local and
global theories by writing those of the global theory in
{\color{brickred} brick red colour}.

\paragraph{Span-based parametricity}

In our global theory, $\foralle\Gamma$ exactly corresponds to the
$\Gamma^\P$ of the external parametricity translation. For types
however we don't compute parametricity relations, but parametricity spans:
$\foralle A : \Ty\,(\foralle \Gamma)$ together with maps $\ke_A :
\Tm\,(\foralle\Gamma,\foralle A)\,(A[\ke_\Gamma])$. Similarly, in the
local theory, $\forall A$ is a type with the structure of a span $A
\overset{\o_A}{\longleftarrow} \forall
A\overset{\i_A}{\longrightarrow} A$. We can recover the relational
version (Bridge type) using the strict identity
type $\Eq$:
\[
A^\P\,a_0\,a_1 := \Sigma(a:\forall A).\Eq_A\,(\o_A\,a)\,a_0 \times \Eq_A\,(\i_A\,a)\,a_1.
\]
Just as $\blank^\P$, the operation $\forall$ computes definitionally
on $\top$, $\Sigma$, $\Eq$ and $\Bool$. For example, $\forall$ of a
$\Sigma$ is equal to a $\Sigma$ of $\forall$s. On function types we
have the span-preservation variant of usual relation-preservation
$(A\Ra B)^\P\,f_0\,f_1 =
\Pi(x_0:A[\o],x_1:A[\i],x_2:A^\P\,x_0\,x_1).B^\P\,(f_0\oldapp
x_0)\,(f_1\oldapp x_1)$ saying that related inputs are mapped to
related outputs. An element of $\forall(A\Ra B)$ corresponds to a
function $t$ from $\forall A$ to $\forall B$, and functions $t_k$ from
$A[k]$ to $B[k]$ such that the following diagram commutes.
\[\begin{tikzcd}
A[\o] \arrow[d,"t_0"'] & \forall A\arrow[l,"\o_A"']\arrow[d,"t"]\arrow[r,"\i_A"] & A[\i]\arrow[d,"t_1"] \\
B[\o] & \forall B \arrow[l,"\o_B"]\arrow[r,"\i_B"'] & B[\i]
\end{tikzcd}\]
This correspondence holds only up to isomorphism in the model and thus
in our theory.

In the external parametricity translation, we have $\U^\P\,a_0\,a_0 =
(\El\,a_0\Ra\El\,a_1\Ra\U)$. The main reason that we use span-based
instead of relation-based parametricity is that in our model this is
not an equality, only a logical equivalence.\footnote{Although if we
observe that both sides are the type of objects of some category, we
can say that it extends to an equivalence of categories.} We have
maps in both directions, but the composite map
\[
(\El\,a_0\Ra\,\El\,a_1\Ra\U)\hspace{1em}\ra\hspace{1em}\U^\P\,a_0\,a_1\hspace{1em}\ra\hspace{1em}(\El\,a_0\Ra\,\El\,a_1\Ra\U)
\]
is not the identity (and neither is the other roundtrip). However, if
we replace relations by spans, we do have that the analogous composite
map for spans
\[
\Sigma(a,a_0,a_1:\U).(\El\,a\hspace{-0.1em}\Ra\hspace{-0.1em}\El\,a_0)\times(\El\,a\hspace{-0.1em}\Ra\hspace{-0.1em}\El\,a_1)\hspace{0.0em}\ra\hspace{0.0em}\forall\U\hspace{0.0em}\ra\hspace{0.0em}\Sigma(a,a_0,a_1:\U).(\El\,a\hspace{-0.1em}\Ra\hspace{-0.1em}\El\,a_0)\times(\El\,a\hspace{-0.1em}\Ra\hspace{-0.1em}\El\,a_1)
\]
is the identity. The intuition for
why this works for spans and not for relations is that presheaves are
span-based by nature. Given a presheaf $\Gamma$ over BCH-cubes
$\square$, we denote the action on objects $\Gamma : \square\ra\Set$ and
the action on morphisms $\blank[\blank]_\Gamma :
\Gamma\,I\ra\square(J,I)\ra\Gamma\,J$. Now, at levels $0$ and $1$ we
have a span
$\Gamma\,0 \overset{\blank[0]_\Gamma}{\longleftarrow} \Gamma\,1 \overset{\blank[1]_\Gamma}{\longrightarrow} \Gamma\,0$
instead of a relation $\Gamma\,0\ra\Gamma\,1\ra\Set$. Relation-based
presheaves are called Reedy fibrant (relative to families of sets as the underlying notion of ``fibration'')
\cite{DBLP:journals/corr/KrausS17}, and it should be possible to construct a
Reedy fibrant presheaf model of internal parametricity, but we leave this for future
work. A model of internal parametricity based on refined presheaves
similar to Reedy fibrant ones is
\cite{DBLP:journals/entcs/BernardyCM15}.

The other roundtrip for the correspondence $\forall\U\leftrightarrow
\text{Span}$ is unfortunately not identity in our model. Thus we
justify this correspondence up to a section-retraction pair.

\paragraph{Metatheory}

As our global theory arose from a presheaf model, it is not surprising
that it is modelled by the exact same presheaf category. We extend
gluing \cite{DBLP:conf/rta/KaposiHS19} to the global theory, and
define a global section functor satisfying the necessary conditions
from the syntax to our presheaf model. We know that the syntaxes of
the global and local theories are isomorphic, hence, as a result, our
local theory satisfies canonicity and has an external parametricity
translation. We conjecture that a version of our theory without
equality reflection satisfies normalisation.

\subsection{Structure of the Paper}

After summarising related work and our notations, we introduce our
local theory in Section \ref{sec:local}, and describe some
applications including well-known usages of internal
parametricity. This section can be understood without prior
familiarity with models of type theory or presheaves. For the rest of
the paper we try to be as self-contained as possible, and refer to the
relevant literature.

In Section \ref{sec:global}, we define the global version of the
theory as a generalised algebraic theory (GAT). The syntax of the
theory is given by the initial algebra (model) which exists for any
GAT. We also show that the
global theory has a presheaf model. Section \ref{sec:iso} shows the
isomorphism of the local and global syntaxes. Then in Section \ref{sec:gluing} we prove that our
global theory has a gluing model and as a consequence satisfies
canonicity: every closed term of the boolean type is equal to either
true or false. By the previous isomorphism this result holds for both the local and global
theories.

\subsection{Related Work}

The first type theory with internal parametricity was defined by
Bernardy and Moulin \cite{DBLP:conf/lics/BernardyM12}. It contains a
syntax for arbitrary dimensional cubes. The $\apd$ operator in our
local syntax is very similar to their double bracket operator, but
our theory only mentions cubes up to dimension three. Bernardy and
Moulin simplified their syntax later using named dimensions
\cite{DBLP:conf/icfp/BernardyM13} and further refined it using a sort
of intervals \cite{DBLP:journals/entcs/BernardyCM15} similar to that
of cubical type theories
(e.g.\ \cite{DBLP:conf/types/CohenCHM15}). Using the same (BCH)
cube category as the first cubical set model of univalence
\cite{DBLP:conf/types/BezemCH13}, the paper
\cite{DBLP:journals/entcs/BernardyCM15} defines a presheaf model of
internal parametricity. It uses a refined notion of dependent presheaf
for interpreting types similar to Reedy fibrancy
\cite{DBLP:journals/corr/KrausS17}. Our presheaf model uses ordinary
presheaves and avoids the need of Reedy fibrancy by having span-based
instead of relational
parametricity. \cite{DBLP:journals/entcs/BernardyCM15} has an equation
``SURJ-TYP'' the analog of which we were not able to justify in our
model. This would correspond to $\unspan$ being an isomorphism, not
only a section (see (7) in Problem \ref{con:psh_global}). It
seems that having Reedy fibrant types does not make a difference in this respect.

Cavallo and Harper
\cite{DBLP:journals/lmcs/CavalloH21,DBLP:phd/us/Cavallo21} define a
cubical type theory with univalence and internal parametricity at the
same time. They support a double-presheaf model with cartesian cubes
for the identity type and BCH cubes for parametricity. They justify
relational parametricity, but the correspondence between the logical
relation at $\U$ and relation space only holds up to internal
equivalence (and hence propositional equality, by univalence), and not
definitional section-retraction as in our theory. This is enough to
derive the consequences of internal parametricity, however. Van
Muylder, Nuyts and Devriese \cite{new} extend Cubical Agda with
internal parametricity following Cavallo and Harper. Inside this
theory they shallowly embed a ``relational observational type theory''
in which logical relations are computed as in
\cite{DBLP:conf/lics/BernardyM12}, but it does not feature iterated
parametricity.

Nuyts et
al.\ \cite{DBLP:journals/pacmpl/NuytsVD17,DBLP:conf/lics/NuytsD18}
analyse the presheaf model of internal parametricity, and define type
theories where the parametricity relation and the (non-univalent)
identity type are special cases of a general construction
``relatedness''. These syntaxes use two different kind of $\Pi$ types
(parametric and non-parametric ones) and there is no proof of
canonicity.

Our global syntax is very close to the naive syntax in
\cite{DBLP:conf/types/AltenkirchK15}. By closely following a model, we
make sure that we do not miss any equations, and we manage to prove
canonicity, even being ``naive'' in their sense.

Recent work on cubical type theories
\cite{DBLP:conf/types/CohenCHM15,DBLP:journals/mscs/AngiuliBCHHL21,DBLP:journals/jfp/VezzosiMA21}
has used different cube categories that contain diagonals and
sometimes connections as well, due to their advantages when
formulating higher inductive types.  The resulting presheaf categories
satisfy a different computation rule for $\forall$ of $\Pi$ (function
extensionality), which is correct homotopically but inappropriate for
parametricity.  On the other hand, earlier work on cubical homotopy
theory used a cube category lacking symmetries, which also fails to
have our desired computation rule for $\forall$ of $\Pi$.  The BCH
cube category is ``just right''.

Finally, although we explicitly discuss only ``binary'' parametricity,
one can consider $n$-ary parametricity for any natural number $n$, in
which case there are $n$ possible values for $k$ wherever it appears.
Unary parametricity is also common in the literature.  Nothing that we
say should be sensitive to the choice of $n$, and often even our
notation can be applied directly in the $n$-ary case.

\subsection{Metalanguage and Notation}
\label{sec:metalang}

Our metalanguage is extensional type theory with quotients and
propositional extensionality (unlike in the above paragraph ``Local
theory'', this extensional type theory is not the outer level of a
two-level type theory). Our constructions can be also understood as
taking place in a constructive set theory. We use Agda-style notation
with implicit arguments usually omitted or written in curly braces
$\{{\dots}\}$ and we employ implicit coercions and overloaded
projections.

We use categories with families (CwFs,
\cite{DBLP:journals/corr/abs-1904-00827}) as the notion of model of
type theory. The components of the category part are denoted $\Con$,
$\Sub$, $\blank\circ\blank$, $\id$, the terminal object (empty
context) $\diamond$, the empty substitution $\epsilon$. The families
of types and terms are $\Ty$, $\Tm$, their instantiation of
substitution operations are both denoted $\blank[\blank]$. We write
$\Gamma\ext A$ for the context $\Gamma$ extended by the type $A$. We
write $(\sigma,t) : \Sub\,\Delta\,(\Gamma\ext A)$ for
$\sigma:\Sub\,\Delta\,\Gamma$ and $t : \Tm\,\Delta\,(A[\sigma])$, and
denote the projections by $\p : \Sub\,(\Gamma\ext A)\,\Gamma$ and
$\q:\Tm\,(\Gamma\ext A)\,(A[\p])$. We write $(\sigma\uparrow)$ for
$(\sigma\circ\p,\q)$. We write $\p^2 = \p\circ\p$, $\p^3 = \p^2\circ\p$,
and natural numbers for De Bruijn indices: $n = \q[\p^n]$. We denote
an isomorphism between two contexts by $\sigma:\Gamma\cong\Delta$
which means that $\sigma:\Sub\,\Gamma\,\Delta$ and there is also a
$\sigma^{-1}:\Sub\,\Delta\,\Gamma$ and both compositions
$\sigma\circ\sigma^{-1}$ and $\sigma^{-1}\circ\sigma$ are the
identity. Similarly, for types we write $A\cong B$ to mean that we
have terms $t:\Tm\,(\Gamma\ext A)\,(B[\p])$ and $t^{-1}:\Tm\,(\Gamma\ext
B)\,(A[\p])$ such that $t[\p,t^{-1}] = \q$ and $t^{-1}[\p,t] = \q$.

\section{The local theory and applications}
\label{sec:local}

In this section we list and explain the rules of our local theory with
internal parametricity and show how to apply it to derive consequences
of parametricity. This section can be understood without previous
knowledge of models of type theory.

\begin{figure}
  \begin{gather*}
    \infer{\top}{} \hspace{2em}
    \infer{\tt:\top}{} \hspace{2em}
    \infer{t = \tt}{t : \top} \\
   \infer{\Sigma(x:A).B}{A && x:A\vdash B} \hspace{2em}
    \infer{(u,v) : \Sigma(x:A).B}{u : A && v : B[v\mapsto u]} \hspace{2em}
    \infer{\pi_1\,t:A}{t : \Sigma(x:A).B} \hspace{2em}
    \infer{\pi_2\,t:B[x\mapsto \pi_1\,t]}{t : \Sigma(x:A).B} \\
    \pi_1\,(u,v) = u\hspace{2em}
    \pi_2\,(u,v) = v\hspace{2em}
    t = (\pi_1\,t,\pi_2\,t) \\
    \infer{\Eq_A\,a_0\,a_1}{a_0 : A && a_1 : A} \hspace{2em}
    \infer{\refl_a : \Eq_A\,a\,a}{a : A} \hspace{2em}
    \infer{a_0 = a_1}{e : \Eq_A\,a_0\,a_1} \hspace{2em}
    \infer{e = \refl_a}{e:\Eq_A\,a\,a} \\
    \infer{\Pi(x:A).B}{A && x:A\vdash B} \hspace{2em}
     \infer{\lambda x.t:\Pi(x:A).B}{x:A\vdash t:B} \hspace{2em}
     \infer{t\oldapp u:B[x\mapsto u]}{t:\Pi(x:A).B && u:A} \\
    (\lambda x.t)\oldapp u = t[x\mapsto u] \hspace{2em} 
    t = \lambda x.t\oldapp x \\
    \infer{\U}{} \hspace{2em}
    \infer{\El\,a}{a :\U} \hspace{2em}
    \infer{\c\,A : \U}{A} \hspace{2em}
    \El\,(\c\,A) = A \hspace{2em}
    \c\,(\El\,a) = a \\
    \infer{\Bool}{} \hspace{1.2em}
    \infer{\true:\Bool}{} \hspace{1.2em}
    \infer{\false:\Bool}{} \hspace{1.2em} 
    \infer{\ite_{x.A}\,t\,u\,v : A[x\mapsto t]}{x:\Bool\vdash A \hspace{1.2em} t:\Bool \hspace{1.2em} u : A[x\mapsto \true] \hspace{1.2em} v : A[x\mapsto\false]} \\
    \ite\,\true\,u\,v = u \hspace{2em}
    \ite\,\false\,u\,v = v \hspace{2em}
  \end{gather*}
  \caption{The core theory. See Definition \ref{def:coreExt} for an external description.}
  \label{fig:core}
\end{figure}

\begin{figure}
  \begin{alignat*}{10}
    & \rlap{The new operations:} \\
    & \rlap{$\infer{\forall A}{A} \hspace{1.5em}
    \infer{\ap(x.t)\,a_2 : \forall B}{B \hspace{1.5em} x:A\vdash t:B \hspace{1.5em} a_2 :\forall A} \hspace{1.5em}
    \infer{\foralld(x.B)\,a_2}{x:A\vdash B \hspace{1.5em} a_2 : \forall A} \hspace{1.5em}
    \infer{\apd(x.t)\,a_2 : \foralld(x.B)\,a_2}{x:A\vdash B \hspace{1.5em} x:A\vdash t:B \hspace{1.5em} a_2 :\forall A} \hspace{2em}$} \\
    & \rlap{$\infer{k_A\,a_2 : A}{a_2:\forall A} \hspace{2em}
    \infer{\R_A\,a : \forall A}{a:A} \hspace{2em} 
    \infer{\S_A\,a_{22} : \forall (\forall A)}{a_{22}:\forall (\forall A)} \hspace{2em}
    \infer{\unspan\,A_k\,A\,x.t_k:\forall\U}{A_k && A && x:A\vdash t_k : A_k} \hspace{2em}$} \\
    & \rlap{$\infer{\mkpi\,a_2\,t_k\,t : \foralld(x.\Pi(y:B).C)\,a_2}
    {\arraycolsep=1.4pt\begin{array}{l}
        x:A\vdash B \\
        x:A,y:B\vdash C \\
        a_2:\forall A \\
        t_k : \Pi(y:B[x\mapsto k_A\,a_2]).C[x\mapsto k_A\,a_2] \\
        t : \Pi(y_2:\foralld(x.B)\,a_2).\foralld((x,y).C)\,(a_2,y_2) \\
        y_2:\foralld(x.B)\,a_2 \vdash t_k\oldapp\d{k}_{x.B}\,a_2\,y_2 = \d{k}_{(x,y).C}\,(a_2,y_2)\,(t\oldapp y_2)
    \end{array}} \hspace{1.7em}
    \infer
        {\arraycolsep=1.4pt\begin{array}{l}
            \d{k}_{x.B}\,a_2\,b_2 : B[x\mapsto k_A\,a_2] \\
            \d{k}_{x.B}\,a_2\,b_2 := \pi_2\,(k_{\Sigma(x:A).B}\,(a_2,b_2)) \\
            \hspace{1em}
        \end{array}}
        {\arraycolsep=1.4pt\begin{array}{l}
          \text{An abbreviation:\hspace{7em}} \\[0.5em]
            x:A\vdash B \\
            a_2:\forall A \\
            b_2:\foralld(x.B)\,a_2
        \end{array}}$} \\
    & \rlap{The new equations:} \\
    & \text{Core}\hspace{1.5em} &&
    \ap(x.g[y\mapsto f])\,a_2 = \ap(y.g)\,(\ap(x.f)\,a_2) \hspace{2em} 
    \ap(x.x)\,a_2 = a_2 \hspace{2em}
    \forall \top = \top \hspace{2em} \\
    & && \foralld(x.C[y\mapsto f])\,a_2 = \foralld(y.C)\,(\ap(x.f)\,a_2) \hspace{1.5em}
    \apd(x.t[y\mapsto f])\,a_2 = \apd(y.t)\,(\ap(x.f)\,a_2) \hspace{2em} \\
    & && \forall (\Sigma(x:A).B) = \Sigma(x_2:\forall A).\foralld(x.B)\,x_2 \hspace{2em}
    \ap(x.(u,v))\,a_2 = (\ap(x.u)\,a_2,\apd(x.v)\,a_2) \hspace{2em} \\
    & \text{Const} && \foralld(\_.B)\,a_2 = \forall B\hspace{2em} 
    \infer{\apd(x.t)\,a_2 = \ap(x.t)\,a_2}{B && x:A\vdash t:B}\hspace{2em} \\
    & k,\R,\S && k_B\,(\ap(x.f)\,a_2) = f[x\mapsto k_A\,a_2] \hspace{2em}
    \ap(x.f)\,(\R_A\,a) = \R_B\,(f[x\mapsto a]) \hspace{2em} \\
    & && \ap\big(x_2.\ap(x.f)\,x_2\big)\,(\S_A\,a_{22}) = \S_B\,\big(\ap(x_2.\ap(x.f)\,x_2)\,a_{22}\big) \hspace{2em}
    k_A\,(\R_A\,a) = a \hspace{2em} \\
    & && k_{\forall A}\,(\S_A\,a_{22}) = \ap(x_2.k_A\,x_2)\,a_{22} \hspace{2em}
    \S_A\,(\R_{\forall A}\,a_2) = \ap(x.\R_A\,x)\,a_2 \hspace{2em}
    \S_A\,(\S_A\,a_{22}) = a_{22} \hspace{2em} \\
    & && \S_{\forall A}\,\big(\ap(x_{22}.\S_A\,x_{22})\,(\S_{\forall A}\,a_{222})\big) = \ap(x_{22}.\S_A\,x_{22})\,\big(\S_{\forall A}\,(\ap(x_{22}.\S_A\,x_{22})\,a_{222})\big) \hspace{2em} \\[0.3em]
    & \Sigma && \foralld(x.\Sigma(y:B).C)\,a_2 = \Sigma(y_2:\foralld(x.B)\,a_2).\foralld((x,y).C)\,(a_2,y_2) \\
    & && \apd(x.(u,v))\,a_2 = (\apd(x.u)\,a_2,\apd(x.v)\,a_2) \hspace{2em} \\[0.3em]
    & \Eq && \foralld(x.\Eq_B\,u\,v)\,a_2 = \Eq_{\foralld(x.B)\,a_2}\,(\apd(x.u)\,a_2)\,(\apd(x.v)\,a_2) \hspace{2em} \\[0.3em]
    & \Pi && \d{k}_{x.\Pi(y:B).C}\,a_2\,(\mkpi\,a_2\,t_k\,t) = t_k  \hspace{2em} \\
    & && \lambda y_2.\apd\big((x,f,y).f\oldapp y\big)\,(a_2,\mkpi\,a_2\,t_k\,t,y_2) = t \\
    && & \mkpi\,a_2\,\big(\d{k}_{x.\Pi(y:B).C}\,a_2\,t_2\big)\,\big(\lambda y_2.\apd((x,f,y).f\oldapp y)\,(a_2,t_2,y_2)\big) = t_2 \\[0.3em]
    & \U && \El\,(k_\U\,(\unspan\,A_k\,A\,t_k)) = A_k \hspace{2em}
    \foralld(x.\El\,x)\,(\unspan\,A_k\,A\,t_k) = A \hspace{2em} \\
    & && \d{k}_{x.\El\,x}\,(\unspan\,A_k\,A\,t_k)\,a = t_k[x\mapsto a] \\[0.3em]
    & \Bool && \forall\Bool = \Bool \hspace{2em}
    \ap(x.\true)\,a_2 = \true \hspace{2em}
    \ap(x.\false)\,a_2 = \false \hspace{2em} \\
    & && \apd(x.\ite_{y.B}\,t\,u\,v)\,a_2 = \ite_{y.\foralld(x.B)\,a_2}\,(\ap(x.t)\,a_2)\,(\ap(x.u)\,a_2)\,(\ap(x.v)\,a_2) \hspace{2em}
  \end{alignat*}
\caption{The rules extending the core theory with internal
  parametricity (local theory). The equations are grouped by type former.}
\label{fig:local}
\end{figure}

We list the operations and equations of our core type theory in Figure \ref{fig:core} and the rules
providing internal parametricity in Figure \ref{fig:local}. The
notation can be understood as listing the operations and equations of
a second-order generalised algebraic theory (SOGAT)
\cite{DBLP:journals/corr/abs-1904-04097,DBLP:conf/fscd/BocquetKS23}. This
is why we don't have to list rules about contexts or
substitutions. (The external presentation of the core theory is Definition \ref{def:coreExt},
the external presentation of the local theory is described in Section \ref{sec:usingglobal}.) We have two sorts, types
are denoted $A$, terms are $t : A$. In case an operation is a binder,
some of its arguments have extra dependencies listed before a
$\vdash$. Some operations have implicit arguments, for example the
constructor for $\Sigma$ types $(\blank,\blank)$ has two implicit
inputs $A$ and $B$ which we omit for readability. Similarly, the
equation $\pi_1\,(u,v) = u$ has four implicit arguments: $A, B, u, v$.
We omit the premises of most equations. Note that the $\eta$ rule for
$\Sigma$ types written $t = (\pi_1\,t,\pi_2\,t)$ only makes sense for
$t$ having a $\Sigma$ type, so we don't have to add this as an
explicit assumption. If one views Figures \ref{fig:core} and \ref{fig:local} as
constructors for a syntax, then we have well-typed (intrinsic) terms
which are quotiented by conversion.


\paragraph{Ad Figure \ref{fig:core}}

We have extensional Martin-Löf type theory with unit type $\top$,
$\Sigma$ types, identity types with reflection and uniqueness
of identity (UIP), $\Pi$ types which all satisfy $\beta$ and $\eta$
rules. We have a Coquand-universe \cite{coquandUniverse} which we
don't index for convenience, but everything in this paper can be
redone using universe indexing. In that case we would need to index
types (and terms) by their level as well. Finally, we have a type of
booleans with dependent elimination $\ite$.

\paragraph{Ad Figure \ref{fig:local}}

We have a nondependent $\forall$ on types which computes the logical
span from a type. We know that nondependent maps (terms $x:A\vdash
t:B$ where $B$ does not depend on $x$) preserve logical spans, we call
this $\ap$ as an homage to the ``apply on path'' congruence operation
in HoTT. For types depending on a variable, we have dependent logical
spans $\foralld$ which depend on a base logical span. These are also
preserved by terms witnessed by $\apd$ (this is our main internal parametricity operation). Then we have projections from
$\forall A$ to $A$, in the binary case $k$ can be either $\o$ or
$\i$. The $k_A$ provide the legs of the logical span $\forall A$. (The
dependent version of this projection can be derived using $\Sigma$
types.) We can build degenerate logical spans using $\R$ and we can
apply a symmetry operation $\S$ to a double-span (two-dimensional
span). We explain $\mkpi$ and $\unspan$ below. The $k$, $\R$ and $\S$
operations are natural with respect to $\ap$ and satisfy five
additional equations (which reflect the equations of the BCH cube category):
\begin{enumerate}
\item the bases of a degenerate span are the same as the element we started with,
\item the two different ways of taking the bases of a double-span are related by $\S$,
\item the two different ways of degenerating a span into a double-span are related by $\S$,
\item double symmetry is identity,
\item we can apply symmetry to a triple span in two different ways: we
  can swap dimensions $0$ and $1$, or we can swap dimensions $1$ and $2$; now
  first swapping $0-1$, then $1-2$, then $0-1$ is the same as first
  swapping $1-2$, then $0-1$, then $1-2$. Combined with naturality of
  $\S$, this ensures that the induced symmetries of an $n$-fold span
  form the $n$-ary symmetric group.
\end{enumerate}
In addition to $\forall$, $\foralld$, $\ap$, $\apd$, $k$, $\R$, $\S$,
we have two more operations which concern $\Pi$ types and the
universe ($\mkpi$ and $\unspan$). We will explain them below.

The core equations say that $\ap$, $\foralld$ and $\apd$ are
functorial with respect to nondependent maps, $\forall$ on unit
returns unit and $\forall$, $\ap$ on $\Sigma$ types is pointwise.

The constant equations express that $\forall$ is a special case of
$\foralld$ when the type is actually nondependent. We have an
analogous equation relating $\apd$ and $\ap$.

For $\Sigma$ types, $\foralld$ is defined in a pointwise way. $C$ has
two dependencies $x:A,y:B\vdash C$, so we have $x:A,y:\forall
B\vdash\foralld(y.C)$, which is not what we want. Instead, we collect
$x$ and $y$ into a $\Sigma$ type and use $w:\Sigma(x:A).B\vdash
C[x\mapsto\pi_1\,w,y\mapsto\pi_2\,w]$ which we abbreviate
$(x,y):\Sigma(x:A).B\vdash C$ by ``pattern matching'' on $w$. $\apd$
preserves pairing and preservation of $\pi_1$ and $\pi_2$ are
provable.

Just as $\Sigma$, strict identity $\Eq$ is strictly
preserved. Preservation of $\refl$ is automatic by UIP.

$\Pi$ types are only preserved by $\foralld$ up to
isomorphism. However we don't have that $\forall(\Pi(x:B).C)$ is
isomorphic to $\Pi(y_2:\forall B).\foralld(y.C)\,y_2$ because we can
use $k_{\Pi(y:B).C}$ to obtain a function $\Pi(y:B).C$ from an element
of the first type, but we cannot obtain such a function from an
element of the second type. Hence we have to add more information to
the second type: functions at the bases that are compatible with the
function at the apex. Moreover $B$ itself can be dependent. So the
final statement is that the obvious map from
$\foralld(x.\Pi(y:B).C)\,a_2$ to a $t$ and $t_k$s that make the
following diagram commute is an isomorphism.
\[\begin{tikzcd}
  B[k_A\,a_2] \arrow[r,"t_k"] & C[k_A\,a_2] \\
  (y_2:\foralld(x.B)\,a_2) \arrow[u,"\d{k}_{x.B}\,a_2"]  \arrow[r,"t"] & \foralld((x,y).C)\,(a_2,y_2) \arrow[u,"\d{k}_{(x,y).C}\,(a_2{,}y_2)"']
\end{tikzcd}\]
The $t$ is computed as follows.
\[
\infer{t := \lambda y_2.\apd((x,f,y).f\oldapp y)\,(a_2,t_2,y_2) : \Pi(y_2:\foralld(x.B)\,a_2).\foralld((x.y).C)\,(a_2,y_2)}
      {t_2 : \foralld(x.\Pi(y:B).C)\,a_2}
\]
The $t_k$s are just given by the $\d{k}_{x.\Pi(y:B).C}$s. The inverse
of this map is called $\mkpi$.

Just as $\Pi$ is not preserved up to equality by $\forall$, $\U$ is
also not preserved up to equality. There is an obvious map from
$a_2:\forall\U$ to a span: the apex is 
$\foralld(x.\El\,x)\,a_2$, the bases $\El\,(k_\U\,a_2)$, and the
legs are
$x_2:\foralld(x.\El\,x)\,a_2\vdash\d{k}_{x.\El\,x}\,a_2\,x_2$. This map has a section called $\unspan$.  The
premises of $\unspan$ are universally quantified over $k$, hence
include \emph{two} types $A_\o$ and $A_\i$, and two terms $t_\o$ and
$t_\i$.

$\Bool$, its constructors and eliminator are preserved strictly by
$\forall$ and $\apd$.

\subsection{Some Derivable Equations}

For $\Sigma$, it was enough to state that
$(\blank,\blank)$ is preserved by $\apd$, the other direction is
automatic:
\[
\apd(x.\pi_k\,t)\,a_2 = \pi_k\,\big(\apd(x.\pi_1\,t)\,a_2,\apd(x.\pi_2\,t)\,a_2\big) = \pi_k\,\big(\apd(x.(\pi_1\,t,\pi_2\,t))\,a_2\big) = \pi_k\,(\apd(x.t)\,a_2)
\]
$\R_A$ is a special case of $\ap$: $\R_A\,a = \R_A\,(a[\_\mapsto \tt]) = \ap(\_.a)\,(\R_\top\,\tt) = \ap(\_.a)\,\tt$. \\
Constant $\ap$ is constant: $\ap(\_.b)\,a_2 = \ap(\_.b[\_\mapsto\tt])\,a_2 = \ap(\_.b)\,(\ap(\_.\tt)\,a_2) = \ap(\_.b)\,\tt = 
\ap(\_.b)\,(\ap(\_.\tt)\,a_2') = \ap(\_.b[\_\mapsto\tt])\,a_2' = \ap(\_.b)\,a_2'$. \\
$k$ is distributive over $\Sigma$: $k_{\Sigma(x:A).B}\,(a_2,b_2)=(k_A\,a_2,\d{k}_{x.B}\,a_2\,b_2)$. \\
$\d{k}$ on composition: $\d{k}_{x.B[y\mapsto t]}\,c_2=\d{k}_{y.B}\,(\ap(x.t)\,c_2)$. \\
Relationship between $\d{k}$ and $\apd$: $\d{k}_{x.B}\,a_2\,(\apd(x.t)\,a_2) = t[x\mapsto k_A\,a_2]$.

\subsection{Applications}

\paragraph{Polymorphic identity.}

We revisit the ``hello world'' example of internal parametricity: a term witnessing that there is only one function with the type of the polymorphic identity. More precisely, given a function $f:\Pi(x:\Sigma (y:\U).\El\,y).\El\,(\pi_1\,x)$, a type $A$, a predicate $(x:A)\vdash P$, a term $a:A$ and a proof $p:P[x\mapsto a]$ that the predicate holds for $a$, we construct the term
\[
\apd(x.f\oldapp x)\,\big(\unspan\,A\,(\Sigma(x:A).P)\,(x.\pi_1\,x),(a,p)\big)
\]
which has type
\begin{alignat*}{10}
  & \foralld(x.\El\,(\pi_1\,x))\,\big(\unspan\,A\,(\Sigma(x:A).P)\,(x.\pi_1\,x) , (a,p)\big) & = \\
  & \foralld(x.\El\,x)\,\Big(\ap\,(x.\pi_1\,x)\,\big(\unspan\,A\,(\Sigma(x:A).P)\,(x.\pi_1\,x) , (a,p)\big)\Big) \hspace{2em} & = \\
  & \foralld(x.\El\,x)\,\big(\unspan\,A\,(\Sigma(x:A).P)\,(x.\pi_1\,x)\big) & = \\
  & \Sigma(x:A).P
\end{alignat*}
The unary version of our theory suffices (thus $k = \o$). We compute the first projection of this term.
\begin{alignat*}{10}
  & \pi_1\,\Big(\apd(x.f\oldapp x)\,\big(\unspan\,A\,(\Sigma(x:A).P)\,(x.\pi_1\,x),(a,p)\big)\Big) & = \\
  & & \hspace{-9em}(\d{k}_{x.\El\,x}\,(\unspan\,A_k\,A\,t_k)\,a = t_k[x\mapsto a]) \\
  & \d{k}_{x.\El\,x}\,\big(\unspan\,A\,(\Sigma(x:A).P)\,(x.\pi_1\,x)\big) \\
  & \hphantom{\d{k}_{x.\El\,x}\,}\Big(\apd(x.f\oldapp x)\,\big(\unspan\,A\,(\Sigma(x:A).P)\,(x.\pi_1\,x),(a,p)\big)\Big) & = \\
  & & \hspace{-9em}\text{($\d{k}$ on composition)} \\
  & \d{k}_{x.\El\,(\pi_1\,x)}\,\big(\unspan\,A\,(\Sigma(x:A).P)\,(x.\pi_1\,x),(a,p)\big) \\
  & \hphantom{\d{k}_{x.\El\,(\pi_1\,x)}\,}\Big(\apd(x.f\oldapp x)\,\big(\unspan\,A\,(\Sigma(x:A).P)\,(x.\pi_1\,x),(a,p)\big)\Big) & = \\
  & & \hspace{-9em}\text{(relation of $\d{k}$ and $\apd$)} \\
  & (f\oldapp x)[x\mapsto k_{\Sigma (y:U).\El\,y} (\unspan\,A\,(\Sigma(x:A).P)\,(x.\pi_1\,x),(a,p))] & = \\
  & & \hspace{-9em}\text{(distributivity of $k$)} \\
  & f\oldapp (c\,A,a)
\end{alignat*}
Thus the second projection provides
\[
\pi_2\,\Big(\apd(x.f\oldapp x)\,\big(\unspan\,A\,(\Sigma(x:A).P)\,(x.\pi_1\,x),(a,p)\big)\Big):P[x\mapsto f\oldapp (c\,A,a)]
\]
and we obtain the desired result by choosing the predicate $P$ to be $(x:A)\vdash \Eq_A\,a\,x$.

\paragraph{Induction for Church-encoded natural numbers.}

We can also prove results for higher order polymorphic types using $\mkpi$. As an example, we prove the induction principle for Church encoded natural numbers following \cite{wadler}.
The (uncurried) type of Church natural numbers is $$N=\Pi\Big(x:\Sigma\big(y:\Sigma(z:U).\El\,z\big).\El\,(\pi_1\,y)\ra\El\,(\pi_1\,y)\Big).\El\,\big(\pi_1\,(\pi_1\,x)\big)$$
A natural number algebra is given by a type $A$, $z_A:A$ and $s_A:A\rightarrow A$. A morphism of algebras between $(A,z_A,s_A)$ and $(B,z_B,s_B)$ is a function $f:A\rightarrow B$ such that $f\oldapp  z_A = z_B$ and $f\oldapp (s_A\oldapp n)=s_B\,(f\,n)$ for every $n$. These form a category. We define:
\begin{alignat*}{10}
& zero && :=\lambda x.\pi_2\, (\pi_1 \,x) \\
& suc && := \lambda n. \lambda x.\pi_2\, x\oldapp (n\oldapp x) \\
& ite_{(A,z_A,s_A)} && := \lambda n. n\oldapp (A,z_A,s_A)
\end{alignat*}
The induction principle that we want says that $(N,zero,suc)$ is initial in the category of algebras of $N$, $ite_A$ being the unique morphism from $N$ to the algebra $(A,z_A,s_A)$. (This is equivalent to saying that every displayed model over $N$ has a section, see \cite{DBLP:journals/pacmpl/KaposiKA19} for a generic proof.)
Using binary parametricity, we first show that $ite$ respects morphisms, that is, $f\oldapp(ite_A\oldapp n) = ite_B\oldapp n$. Assuming
\begin{alignat*}{10}
& n : N \\
& A, z_A:A, s_A:A \rightarrow A \\
& B, z_B:B, s_B:B \rightarrow B \\
& x:A\times B\vdash Q\\
& z:Q[x\mapsto (z_A,z_B)]\\
& s:\Pi\big(y:\Sigma(x:A\times B).Q\big).Q\big[x\mapsto \big(s_A\oldapp (\pi_1\,(\pi_1\, y)), s_B\oldapp (\pi_1\,(\pi_2\, y))\big)\big]
\end{alignat*}
we build
\begin{alignat*}{10}
  & ((s_A,s_B),s):\Sigma(x:A\times B).Q \rightarrow \Sigma(x:A\times B).Q\\
  & c_2 := \unspan\,A\,B\,(\Sigma(x:A\times B).Q)\,(y.\pi_1\,(\pi_1\,y))\,(y.\pi_2\,(\pi_1\,y)) : \forall\U \\
  & s^+:=\mkpi\,c_2\,s_A\,s_B\,((s_A,s_B),s) : \foralld(x.\El\,x\ra\El\,x)\,c_2.
\end{alignat*}
Finally we have
\[
\pi_2\,\Big(\apd(x.n\oldapp x)\,\big(c_2,((z_A,z_B),z),s^+\big)\Big):Q\big[x\mapsto \big(n\oldapp (c\,A,z_A,s_A),n\oldapp (c\,B,z_B,s_B)\big)\big].
\]
Now specialising this to $Q[x\mapsto(a,b)] = \Eq_B\, (f\oldapp a)\, b$ we obtain that $ite$ respects $f$.

Then we prove that the only function from $N$ to $X$ is $ite_X$ using the fact that $ite$ respects $ite_X$. This means
$ite_X\oldapp (ite_N\oldapp n) = ite_X\oldapp n$, which is the same as
$(n\oldapp (N,zero,suc))\oldapp (X,z_X,s_X) = n\oldapp (X,z_X,s_X)$. Using function extensionality we obtain $n\oldapp (N,zero,suc)= n$, which means that $ite_N$ is the identity function. We can then conclude by remarking that this gives, for any morphism $f$ from $N$ to $A$, $f\oldapp n = f\oldapp (ite_N\oldapp n) = ite_A\oldapp n$.

\paragraph{Compute $\ap$ on $\R$ using symmetry.}

One reason for needing symmetry $\S_A$ in our syntax is to compute the
$\ap(x.\R_A\,x)\,a_2$ way of turning a witness of a span $a_2:\forall
A$ into a witness of a double-span. Without symmetry, there
seems to be no way to compute the value of the closed boolean
$\ap(x.\R_\Bool\,x)\,\true$. Having symmetry and the rule
$\S_A\,(\R_{\forall A}\,a_2) = \ap(x.\R_A\,x)\,a_2$ we compute
\begin{alignat*}{10}
  & \ap(x.\R_\Bool\,x)\,\true = \S_\Bool\,(\R_{\forall\Bool}\,\true) = \S_\Bool\,(\ap(\_.\true)\,\tt) = \S_\Bool\,(\ap(x_2.\ap(\_.\true)\,x_2)\,\tt) = \\
  & \ap(x_2.\ap(\_.\true)\,x_2)\,(\S_\top\,\tt) = \ap(x_2.\ap(\_.\true)\,x_2)\,\tt = \ap(x_2.\true)\,\tt = \true.
\end{alignat*}

\section{The global theory and its presheaf model}
\label{sec:global}

In this section, we define our global theory which directly comes from
the presheaf model on BCH cubes \cite{DBLP:conf/types/BezemCH13}. We
assume basic knowledge of working with models of type theory as
categories with families (CwFs,
\cite{DBLP:journals/corr/abs-1904-00827}) and the CwF of presheaves
\cite{Hofmann97syntaxand}.

We define the global theory as a generalised algebraic theory (GAT,
\cite{DBLP:journals/pacmpl/KaposiKA19}) by saying what a model of this
theory is. First we present the core theory externally, this is the
common part of the local and global theories. This corresponds to
the core theory listed in Figure \ref{fig:core}, with the additional
structure of strict democracy which is not expressible internally.
\begin{definition}[Core theory, externally]\label{def:coreExt}
  A model of the core theory is a CwF with the following features:
  \begin{itemize}
  \item $\top$, $\Sigma$ with $\beta$ and $\eta$ and strict
    identity types $\Eq$ with uniqueness of identity proofs. This
    turns the CwF into a finite limit CwF (flCwF,
    \cite{DBLP:conf/lics/KovacsK20}).
  \item Strict democracy, that is, an operation $\K$ that turns a
    context into a type satisfying the following equations. Note that this
    includes a sort equation. (Non-strict) democracy requires
    $\Sub\,\Gamma\,\Theta = \Tm\,\Gamma\,(\K\,\Theta)$ only up to
    isomorphism.
    \begin{alignat*}{10}
      & \K : \Con\ra\Ty\,\Gamma  &&                        \K\,\diamond = \top \\                                  
      & (\K\,\Theta)[\sigma] = \K\,\Theta &&                 \K\,(\Theta\ext A) = \Sigma\,(\K\,\Theta)\,(A[\q]) \\   
      & \Sub\,\Gamma\,\Theta = \Tm\,\Gamma\,(\K\,\Theta) \hspace{7em} && (\sigma ,_{\ext} t) = (\sigma ,_{\Sigma} t)             
    \end{alignat*}
    In the equation relating $\K$ of $\ext$ and $\Sigma$, $\q$ can be used as
    a substitution because of the previous equations: $\q :
    \Tm\,(\Gamma\ext\K\,\Theta)\,(\K\,\Theta[\p])$, hence $\q :
    \Tm\,(\Gamma\ext\K\,\Theta)\,(\K\,\Theta)$, and because of the
    sort equation we have $\q : \Sub\,(\Gamma\ext\K\,\Theta)\,\Theta$.
  \item $\Pi$ types given by an isomorphism $\lam : \Tm\,(\Gamma\ext
    A)\,B \cong \Tm\,\Gamma\,(\Pi\,A\,B):\app$ natural in $\Gamma$. We abbreviate
    $t\oldapp u := \app\,t[id,u]$.
  \item A hierarchy of universes {\`a} la Coquand, that is
    $\U:\Ty\,\Gamma$ with an isomorphism $\c :
    \Ty\,\Gamma\cong\Tm\,\Gamma\,\U:\El$. We don't write universe
    indices for convenience, but every construction in this paper can
    be redone in a setting where types, terms and universes are all
    indexed by levels, and $\U_i : \Ty_{i+1}\,\Gamma$.
  \item $\Bool$ with dependent elimination $\ite$ and two $\beta$
    rules (no $\eta$).
  \end{itemize}
\end{definition}
Now we list the components specific to the global theory (c.f.\ Figure
\ref{fig:cubecat}). We use {\color{brickred} brick red colour} to
distinguish from the operations of the local theory.
\begin{definition}[Global theory]\label{def:global}
  A model of the core theory is a model of the global theory if it comes
  equipped with the following structure:
  \begin{enumerate}
  \item A strict flCwF endomorphism on the model $\foralle$ that also
    preserves $\K$ strictly. We denote the four different maps by
    $\foralle_\Con$, $\foralle_\Sub$, $\foralle_\Ty$ and
    $\foralle_\Tm$ or just $\foralle$. Strict preservation of $\K$
    means $\foralle(\K\,\Theta) = \K\,(\foralle\Theta)$ and
    $\foralle_\Sub\,\{\Gamma\}\{\Theta\} =
    \foralle_\Tm\,\{\Gamma\}\{\K\,\Theta\}$.
  \item Natural transformations $\ke$ from $\foralle$ to the identity
    functor on the CwF, for $k = \o$ and $k = \i$. This means for each
    $\Gamma:\Con$, a substitution $\ke_\Gamma :
    \Sub\,(\foralle\Gamma)\,\Gamma$ with
    $\ke_\Gamma\circ\foralle\sigma = \sigma\circ\ke_\Delta$. We define
    the action of $\ke$ on types as follows.
    \begin{alignat*}{10}
      & \ke_A : \Tm\,(\foralle\Gamma\ext\foralle A)\,(A[\ke_\Gamma\circ\p]) \\
      & \ke_A := \q[\ke_{\Gamma\ext A}]
    \end{alignat*}
  \item A natural transformation $\Re$ from the identity to $\foralle$.
  \item A natural transformation $\Se$ from $\foralle\circ\foralle$ to
    itself.
  \item The following five equations relating the above three natural transformations:
    \begin{alignat*}{10}
      \ke_\Gamma\circ\Re_\Gamma & = \id_\Gamma \\
      \ke_{\foralle\Gamma}\circ\Se_\Gamma & = \foralle\ke_\Gamma \\
      \Se_\Gamma\circ\Re_{\foralle\Gamma} & = \foralle\Re_\Gamma \\
      \Se_\Gamma\circ\Se_\Gamma & = \id_{\foralle(\foralle\Gamma)} \\
      \Se_{\foralle\Gamma}\circ\foralle\Se_\Gamma\circ\Se_{\foralle\Gamma} & = \foralle\Se_\Gamma\circ\Se_{\foralle\Gamma}\circ\foralle\Se_\Gamma
    \end{alignat*}
  \item The map from $\foralle(\Pi\,A\,B)$ to compatible
    $(\Pi\,A\,B[\ke])$s and $(\Pi\,(\foralle A)\,(\foralle B))$ has an inverse $\mkpie$. Precisely we require the operation $\mkpie$ with the following equations.
    \begin{alignat*}{10}
      & \mkpie :{} && \big(\sigma:\Sub\,\Delta\,(\foralle\Gamma)\big)\big(t_k : \Tm\,\Delta\,(\Pi\,A\,B[\ke_\Gamma\circ\sigma])\big)\big(t:\Tm\,\Delta\,(\Pi\,(\foralle A)\,(\foralle B)[\sigma])\big)\ra \\
      & && \big(t_k[\p]\oldapp (\ke_A[\sigma\uparrow]) = \ke_B[\sigma\uparrow,\app\,t]\big) \ra \Tm\,\Delta\,(\foralle(\Pi\,A\,B)[\sigma]) \\
      & \rlap{$\mkpie\,\sigma\,t_k\,t[\rho] = \mkpie\,(\sigma\circ\rho)\,(t_k[\rho])\,(t[\rho])$} \\
      & \rlap{$\ke_{\Pi\,A\,B}[\sigma,\mkpie\,\sigma\,t_k\,t] = t_k$} \\
      & \rlap{$\lam\,(\foralle(\app\,\q))[\sigma,\mkpie\,\sigma\,t_k\,t] = t$} \\
      & \rlap{$\mkpie\,\sigma\,(\ke_{\Pi\,A\,B}[\sigma,t_2])\,(\lam\,(\foralle(\app\,\q))[\sigma,t_2]) = t_2$}
    \end{alignat*}
    When we quantify over a subscripted variable such as $t_k$, we mean to quantify over \emph{two} variables $t_\o$ and $t_\i$ with types obtained by substituting $k=\o,\i$ in the type of $t_k$.
  \item The map from $\foralle\U$ to a span has a section called $\unspane$. Precisely we require:
    \begin{alignat*}{10}
      & \unspane : (A_k\,A:\Ty\,\Gamma)\ra\Tm(\Gamma\ext A)\,(A_k[\p])\ra\Tm\,\Gamma\,(\foralle\U[\epsilon]) \\
      & \unspane\,A_k\,A\,t_k[\sigma] = \unspane\,(A_k[\sigma])\,(A[\sigma])\,(t_k[\sigma\uparrow]) \\
      & \El\,(\ke_\U)[\epsilon,\unspane\,A_k\,A\,t_k] = A_k \\
      & \foralle(\El\,\q)[\epsilon,\unspane\,A_k\,A\,t_k] = A \\
      & \ke_{\El\,\q}[(\epsilon,\unspane\,A_k\,A\,t_k)\uparrow] = t_k
    \end{alignat*}
    We do not require the other roundtrip equation.
  \item $\foralle$ preserves $\Bool$ strictly.
  \end{enumerate}
\end{definition}

We define the BCH cube category as depicted in Figure \ref{fig:cubecat}.
\begin{definition}[Cube category $\square$]\label{def:square}
  The objects of the category are natural numbers, the morphisms are
  generated by the following quotient inductive set. First of all we
  have a category, that is, composition and $\id$ with the categorical
  laws, then we have the following constructors and equality
  constructors. When we write $k$ we always mean both a copy for $\o$
  and a copy for $\i$.
  \begin{alignat*}{10}
    & \suc && {}:\square(J,I)\ra\square(1+J,1+I) \hspace{2em} && \suc\,(f\circ g) = \suc\,f\circ\suc\,g \hspace{2em} && \suc\,\id = \id \\
    & k_I && {}:\square(I,1+I) && k_I\circ f = \suc\,f\circ k_J \\
    & \R_I && {}:\square(1+I,I) && \R_I\circ\suc\,f = f\circ\R_J \\
    & \S_I && {}:\square(2+I,2+I) && \S_I\circ\suc\,(\suc\,f) = \suc\,(\suc\,f)\circ\S_J \\
    & && && \R_I\circ k_I = \id_I \\
    & && && \S_I\circ k_{1+I} = \suc\,k_I \\
    & && && \R_{1+I}\circ\S_I = \suc\,\R_I \\
    & && && \S_I\circ\S_I = \id_{2+I} \\
    & && && \S_{1+I}\circ\suc\,\S_I\circ\S_{1+I} = \suc\,\S_I\circ\S_{1+I}\circ\suc\,\S_I
  \end{alignat*}
\end{definition}

\paragraph{Comment on the definition of $\square$.}

Another way to describe this category is that it is the free symmetric
semicartesian strict monoidal category over a cylinder.  In
particular, the final equation on $\S$ is the ``braid equation'',
which together with naturality forms a presentation of the symmetric
group; thus the automorphisms of $I$ are the permutations of an
$I$-element set.  This category also has a named variant, in which the
objects are finite sets and a morphism from $J$ to $I$ is a function
from $J$ to $I+\{\o,\i\}$ that is injective on the subset of $J$ that is not
mapped to $\mathsf{inr}$.

$\S_I$ allows us to swap the first two dimensions, but we would like
to swap the first with any other dimension, on both directions.
\begin{definition}[Generalised symmetries in $\square$]
  We define the following morphisms by induction on natural numbers.
  \begin{alignat*}{10}
    & (\sym_{I_0,1}|\id_{I_1}) && {}: \square(I_0+1+I_1,1+I_0+I_1) \hspace{5em} && (\sym_{1,I_0}|\id_{I_1}) && {}: \square(1+I_0+I_1,I_0+1+I_1) \\
    & (\sym_{0,1}|\id_{I_1}) && {}:= \id_{1+I_1} && (\sym_{1,0}|\id_{I_1}) && {}:= \id_{1+I_1} \\
    & (\sym_{1+I_0,1}|\id_{I_1}) && {}:= \S_{I_0+I_1}\circ\suc\,(\sym_{I_0,1}|\id_{I_1}) && (\sym_{1,1+I_0}|\id_{I_1}) && {}:= \suc\,(\sym_{1,1+I_0}|\id_{I_1})\circ\S_{I_0+I_1}
  \end{alignat*}
\end{definition}
By induction we prove that the generalised symmetries form an isomorphism in $\square$:
\[
(\sym_{I_0,1}|\id_{I_1})\circ(\sym_{1,I_0}|\id_{I_1}) = \id_{1+I_0+I_1} \hspace{2em}\text{and}\hspace{2em}(\sym_{1,I_0}|\id_{I_1})\circ(\sym_{I_0,1}|\id_{I_1}) = \id_{I_0+1+I_1}
\]
We would like to characterise the morphisms in $\square$ without
equations, so that we can have a powerful case analysis (which will be needed for defining the $\foralle$-preservation of $\Pi$ and $\U$). We will need
a case analysis on a morphism $f : \square(J,1+I)$ to know where the
$1$ in the codomain is coming from: either it is coming from a face
map $k_I$ and $f = k_I\circ f'$ or $J = J_0+1+J_1$ for some $J_0$ and
$J_1$ and the $1$ in the codomain comes from the $1$ in the domain,
with the rest being mapped by some $f':\square(J_0+J_1,I)$. We
generalise so that the $1$ in the codomain can be in the middle.
\begin{problem}[Case analysis on morphisms in $\square$]\label{con:case}
  For a morphism $f:\square(J,I_0+1+I_1)$ either
  \begin{enumerate}
  \item[(i)] there is a $k$ and an $f':\square(J,I_0+I_1)$ such that $f = (\sym_{1,I_0}|\id_{I_1}) \circ k_{I_0+I_1}\circ f'$,
  \item[(ii)] or there are $J_0$, $J_1$ with $J = J_0+1+J_1$ and $f':\square(J_0+J_1,I_0+I_1)$ such that
    $f = (\sym_{1,I_0}|\id_{I_1})\circ\suc\,f'\circ(\sym_{J_0,1}|\id_{J_1})$.
  \end{enumerate}
\end{problem}
\begin{proof}[Construction]
  The first case means that the $1$ in the codomain of the morphism
  comes from a face map $k$, the second case means that the $1$ in the
  codomain comes from a $1$ in the domain.
  
  We perform induction by building a displayed model of the quotient
  inductive set in Definition \ref{def:square}, and then using the
  induction principle of the quotient inductive set
  \cite{DBLP:journals/pacmpl/KaposiKA19}. Displayed morphisms are given by
  \begin{alignat*}{10}
    & \square^\bullet(J,I)\,f := \{I_0\,I_1:\square\}\{I = I_0+1+I_1\} \ra \big(\exists k.(f':\square(J,I_0+I_1))\times f = (\sym_{1,I_0}|\id_{I_1}) \circ k_{I_0+I_1}\circ f'\big) + \\
    & \hspace{2em}\big((J_0:\square)\times(J_1:\square)\times(f':\square(J_0+J_1,I_0+I_1))\times f = (\sym_{1,I_0}|\id_{I_1})\circ\suc\,f'\circ(\sym_{J_0,1}|\id_{J_1}) \big).
  \end{alignat*}
  Displayed composition is given by
  \begin{alignat*}{10}
    & \inl\,(k,f')\circ^\bullet g^\bullet && {}:= \inl\,(k,f'\circ g) \\
    & \inr\,(J_0,J_1,f')\circ^\bullet\inl\,(l,g') && {}:= \inl\,(l,f'\circ g') \\
    & \inr\,(J_0,J_1,f')\circ^\bullet\inr\,(K_0,K_1,g') && {}:= \inr\,(K_0,K_1,f'\circ g').
  \end{alignat*}
  We would like to define $f^\bullet\circ^\bullet g^\bullet$ which
  says our induction motive for $f\circ g$. We match on the induction
  hypothesis for $f$ denoted $f^\bullet$: if it is an $\inl$, that is,
  $f = (\sym_{1,I_0}|\id_{I_1}) \circ k_{I_0+I_1}\circ f'$, then we
  have $f\circ g = (\sym_{1,I_0}|\id_{I_1}) \circ k_{I_0+I_1}\circ
  f'\circ g$, so we still know that the $1$ in the codomain of $f\circ
  g$ comes from $k$. If $f^\bullet$ is an $\inr$ (that is, the $1$ in
  the codomain comes from a $1$ in the domain of $f$), then we have to
  match on $g^\bullet$ to learn whether that $1$ in the codomain of
  $g$ comes from $k$ or from the domain of $g$. We then check that
  $\blank\circ^\bullet\blank$ satisfies associativity.

  For the rest of the constructors of $\square$, we list the
  computational parts of the displayed model.
  \begin{alignat*}{10}
    & \id^\bullet\,\{I_0\}\{I_1\} && {}:= \inr\,(I_0,I_1,\id) \\
    & \suc^\bullet\,f^\bullet\,\{0\}\{I\} && {}:= \inr\,(0,J,f) \\
    & \suc^\bullet\,f^\bullet\,\{1+I_0\}\{I_1\} && {}:= \inl\,(k,\suc\,f') \text{, if } f^\bullet\,\{I_0\}\{I_1\} = \inl\,(k,f') \\
    & \suc^\bullet\,f^\bullet\,\{1+I_0\}\{I_1\} && {}:= \inl\,(1+J_0,J_1,\suc\,f') \text{, if } f^\bullet\,\{I_0\}\{I_1\} = \inr\,(J_0,J_1,f') \\
    & k^\bullet\,\{0\}\{I\} && {}:= \inl\,(k,\id_I) \\
    & k^\bullet\,\{1+I_0\}\{I_1\} && {}:= \inr\,(I_0,I_1,k_{I_0+I_1}) \\
    & \R^\bullet\,\{I_0\}\{I_1\} && {}:= \inr\,(1+I_0,I_1,\R_{I_0+I_1}) \\
    & \S^\bullet\,\{0\}\{1+I\} && {}:= \inr\,(1,I,\id_{1+I}) \\
    & \S^\bullet\,\{1\}\{I\} && {}:= \inr\,(0,1+I,\id_{1+I}) \\
    & \S^\bullet\,\{2+I_0'\}\{I_1'\} && {} := \inr\,(2+I_0',I_1',\S_{I_0'+I_1'})
  \end{alignat*}
  The displayed versions of the equations all hold. The displayed
  model induces a dependent function from $\square(J,I_0+1+I_1)$ to
  the sum type by the induction principle of the quotient inductive
  set $\square$.
\end{proof}
This case analysis will be used in parts (6) and (7) of Problem
\ref{con:psh_global}.

\begin{construction}[Presheaf model of the core theory]
We recall the computational parts of the presheaf model over $\square$
\cite{Hofmann97syntaxand} providing a model of the core theory
(Definition \ref{def:coreExt}).

A $\Gamma:\Con$ is a presheaf, that is a family of sets
$\Gamma:\square\ra\Set$ together with reindexing $\gamma_I[f]_\Gamma :
\Gamma\,J$ for $\gamma_I:\Gamma\,I$ and $f:\square(J,I)$ such that
$\gamma_I[f\circ g] = \gamma_I[f][g]$ and $\gamma_I[\id] =
\gamma_I$. A $\sigma:\Sub\,\Delta\,\Gamma$ is a polymorphic function
$\sigma:\{I:\square\}\ra\Delta\,I\ra\Gamma\,I$ with
$(\sigma\,\delta_I)[f]_\Gamma = \sigma\,(\delta_I[f]_\Delta)$. A type
$A:\Ty\,\Gamma$ is a dependent presheaf, that is
$A:(I:\square)\ra\Gamma\,I\ra\Set$ equipped with $\blank[\blank]_A :
A\,I\,\gamma_I\ra(f:\square(J,I))\ra A\,J\,(\gamma_I[f])$ with two
equations. Finally a term is a dependent natural tansformation $t :
(\gamma_I:\Gamma\,I)\ra A\,I\,\gamma_I$ with $t\,\gamma_I[f]_A =
t\,(\gamma_I[f]_\Gamma)$. Context extension and the types $\top$,
$\Sigma$, $\Eq$ are defined pointwise and we have strict democracy by
$\K\,\Theta\,I\,\gamma_I = \Theta\,I$ and $\theta_I[f]_{\K\,\Theta} =
\theta_I[f]_{\Theta}$. The Yoneda embedding is a functor from
$\square$ to presheaves over $\square$, we denote it by
$\y:\square\ra\Con$ and $\y:\square(J,I)\ra\Sub\,(\y\,J)\,(\y\,I)$,
and we have the Yoneda lemma $\yl_\Gamma :
\Gamma\,I\ra\Sub\,(\y\,I)\,\Gamma$ defined by $\yl_\Gamma\,\gamma_I\,f
= \gamma_I[f]_{\Gamma}$. Function space is defined as
$\Pi\,A\,B\,I\,\gamma_I = \Tm\,(\y\,I\ext
A[\yl\,\gamma_I])\,(B[\yl\,\gamma_I\uparrow])$ with $t[f]_{\Pi\,A\,B}
= t[\y\,f\uparrow]$. $\lam\,t\,\gamma_I = t[\yl\,\gamma_I\uparrow]$
and $\app\,t\,(\gamma_I,a_I) = t\,\gamma_I\,(\id_I,a_I)$. The universe
is defined as $\U\,I\,\gamma_I = \Ty\,(\y\,I)$ with $a[f]_\U :=
a[\y\,f]$. Decoding is $\El\,a\,I\,\gamma_I = a\,\gamma_I\,I\,\id_I$
with $u_I[f]_{\El\,a} = u_I[f]_{a\,\gamma_I}$, and encoding is
$\c\,A\,\gamma_I = A[\yl\,\gamma_I]$. We have $\El\,a[\yl\,\gamma_I] =
a\,\gamma_I$. Bool is pointwise.
\end{construction}

\begin{problem}[Presheaf model of the global theory]\label{con:psh_global}
The presheaf model extends to the operations and equations of the
global theory (Definition \ref{def:global}). This justifies the notion
of the global theory, which was in turn extracted from this exact
presheaf model.
\end{problem}
\begin{proof}[Construction]
  Recall the two diagrams of Figure \ref{fig:cubecat}. We define the components in order.
  \begin{enumerate}
  \item $\foralle$ is defined as precomposition with $\suc$:
    \begin{alignat*}{10}
      & \foralle\Gamma\,I && {}:= \Gamma\,(1+I)                         &&                \foralle A\,I\,\gamma_{1+I} && {}:= A\,(1+I)\,\gamma_{1+I} \hspace{3em} &&  \foralle\sigma\,\delta_{1+I} && := \sigma\,\delta_{1+I} \\
      & \gamma_{1+I}[f]_{\foralle\Gamma} && {}:= \gamma_{1+I}[\suc\,f]_{\Gamma} \hspace{3em} &&        a_{1+I}[f]_{\foralle A} && {}:= a_{1+I}[\suc\,f]_{A}                   &&  \foralle t\,\gamma_{1+I} && := t\,\gamma_{1+I}              
    \end{alignat*}
    This preserves the flCwF structure strictly, e.g.\ we have
    $\foralle(\Gamma\ext A)\,I = (\Gamma\ext A)\,(1+I) =
    (\gamma_{1+I}:\Gamma\,(1+I))\times A\,(1+I)\,\gamma_{1+I} =
    (\gamma_{1+I}:\foralle\Gamma\,I)\times\foralle A\,I\,\gamma_{1+I}
    = (\foralle\Gamma\ext\foralle A)\,I$, strict preservation of $\K$
    is by $\foralle(\K\,\Theta)\,I\,\gamma_{1+I} =
    \K\,\Theta\,(1+I)\,\gamma_{1+I} = \Theta\,(1+I) =
    \foralle\Theta\,I = \K\,(\foralle\Theta)\,I\,\gamma_{1+I}$ and by
    the fact that $\foralle_\Sub$ and $\foralle_\Tm$ coincide.
  \item $\ke$ is defined as $\ke_\Gamma\,\gamma_{1+I} :=
    \gamma_{1+I}[k_I]_\Gamma$. This is natural as $(\ke_\Gamma\circ\foralle\sigma)\,\delta_I = (\sigma\,\delta_I)[k_I]{\Gamma} \overset{\sigma \text{ natural}}{=} \sigma\,(\delta_I[k_I]{\Delta}) = (\sigma\circ\ke_\Delta)\,\delta_I$
  \item $\Re_\Gamma\,\gamma_I := \gamma_{I}[\R_I]_\Gamma$
  \item $\Se_\Gamma\,\gamma_{2+I} := \gamma_{2+I}[\S_I]_\Gamma$
  \item The five equations follow from their corresponding equations
    in $\square$. For example, $(\ke_\Gamma\circ\Re_\Gamma)\,\gamma_I = \gamma_I[\Re_I]_{\Gamma}[\ke_I]_{\Gamma} = \gamma_I[\Re_I\circ\ke_I]_{\Gamma} = \gamma_I[\id_I]_{\Gamma} = \gamma_I = \id_\Gamma\,\gamma_I$.
  \item $\mkpie\,\sigma\,t_k\,t\,\delta_I$ is in
    $\foralle(\Pi\,A\,B)\,I\,(\sigma\,\delta_I)$, that is, in
    $\Tm\,(\y(1+I)\ext A[\yl_\Gamma\,(\sigma\,\delta_I)])\,(B[(\yl_\Gamma\,(\sigma\,\delta_I))\uparrow])$. We do case
    analysis (Problem \ref{con:case}) on the input morphism given
    by Yoneda $\y\,(1+I)$:
    \begin{alignat*}{10}
      & \mkpie\,\sigma\,t_k\,t\,\delta_I\,\{J\}\,        && (k_I\circ f,a_J) && {}:= \app\,t_k\,(\delta_I[f]_\Delta, a_J) \\
      & \mkpie\,\sigma\,t_k\,t\,\delta_I\,\{J_0+1+J_1\}\,&& \big((\suc\,f\circ(\sym_{J_0,1}|\id_{J_1})),a_{J_0+1+J_1}\big) && {}:= \\
      & \rlap{$\hspace{14em}\app\,t\,(\delta_I[f]_{\Delta}, a_{J_0+1+J_1}[(\sym_{1,J_0}|id_{J_1})]_A)[\sym_{J_0,1}|\id_{J_1}]_B$}
    \end{alignat*}
    If the morphism selects a projection $k$, then we use the map at
    the base of the span $t_k$. If the morphism selects an existing
    dimension ($1$ in the middle of $J_0+1+J_1$), then we use the
    function at the apex $t$, but we have to apply symmetry to the
    input to make it well-typed. $\app\,t$ requires an input in
    $\foralle A[\sigma]\,(J_0+J_1) (\delta_I[f]) =
    A\,(1+J_0+J_1)\,(\sigma\,(\delta_I[f]))$, but our $a_{J_0+1+J_1}$
    is in
    $A\,(J_0+1+J_1)\,(\sigma\,\delta_I[\suc\,f\circ(\sym_{J_0,1|id_{J_1}})])$. Then
    we have to apply symmetry at the output again:
    $\app\,t\,(\delta_I[f]_{\Delta},
    a_{J_0+1+J_1}[(\sym_{1,J_0}|id_{J_1})]_A)$ is in
    $B\,(1+J_0+J_1)\,(\sigma\,(\delta_I[f]),a_{J_0+1+J_1}[\sym_{1,J_0}|\id_{J_1}])$,
    but we need an element of
    $B\,(J_0+1+J_1)\,(\sigma\,\delta_I[\suc\,f\circ(\sym_{J_0,1}|\id_{J_1})],
    a_{J_0+1+J_1})$.

    This $\mkpie$ operation is natural in three different senses.
    From naturality of $t_k$ and $t$ and from the equation
    $t_k[\p]\oldapp (\ke_A[\sigma\uparrow]) =
    \ke_B[\sigma\uparrow,\app\,t]$, we get
    $\mkpie\,\sigma\,t_k\,t\,\delta_I\,(f,a_J)[g]_{B} =
      \mkpie\,\sigma\,t_k\,t\,\delta_I\,(f\circ g,a_J[g]_A)$.
    The naturalities
    $\mkpie\,\sigma\,t_k\,t\,\delta_I[f]_{\foralle(\Pi A B)} =
    \mkpie\,\sigma\,t_k\,t\,(\delta_I[f]_{\Delta})$ and
    $\mkpie\,\sigma\,t_k\,t[\rho] =
    \mkpie\,(\sigma\circ\rho)\,(t_k[\rho])\,(t[\rho])$ are also easy
    consequences.

    The three equations which express the roundtrips follow by
    unfolding the definitions.
  \item $\unspane\,A_k\,A\,t_k\,\gamma_I$ is in
    $\foralle\U\,I\,\gamma_I = \Ty\,(\y\,(1+I))$. We define this type
    by case analysis on the morphism given by Yoneda.
    \begin{alignat*}{10}
      & \unspane\,A_k\,A\,t_k\,\gamma_I\,J\,&&(k_I\circ f) && := A_k\,J\,(\gamma_I[f]) \\
      & \unspane\,A_k\,A\,t_k\,\gamma_I\,(J_0+1+J_1)\,&& (\suc\,f\circ(\sym_{J_0,1}|\id_{J_1})) && := A\,(J_0+J_1)\,(\gamma_I[f])
    \end{alignat*}
    The restriction operation on this type is given by restriction in
    $A_k$ or $A$, unless the morphism takes us from $A$ to $A_k$: in
    this case we apply $t_k$.
    \begin{alignat*}{10}
      & {a_k}_J[g]_{\unspane\,A_k\,A\,t_k\,\gamma_I} && {}:= {a_k}_J[g]{A_k} \\
      & a_{J_0+J_1}[(\sym_{1,J_0}|\id_{J_1}) \circ k_{J_0+J_1}\circ g]_{\unspane\,A_k\,A\,t_k\,\gamma_I} && {}:= t_k\,(\gamma_I[f\circ g], a_{J_0+J_1}[g]_A) \\
      & a_{J_0+J_1}[(\sym_{1,J_0}|\id_{J_1})\circ\suc\,g\circ(\sym_{K_0,1}|\id_{K_1})]_{\unspane\,A_k\,A\,t_k\,\gamma_I} && {}:= a_{J_0+J_1}[g]_A
    \end{alignat*}
    In the second case we expected a result in the set
    \begin{alignat*}{10}
      & \unspane\,A_k\,A\,t_k\,\gamma_I\,K\,\big(\suc\,f\circ(\sym_{J_0,1}|\id_{J_1})\circ(\sym_{1,J_0}|\id_{J_1})\circ k_{J_0+J_1}\circ g\big) = \\
      & \unspane\,A_k\,A\,t_k\,\gamma_I\,K\,(\suc\,f\circ k_{J_0+J_1}\circ g) = \\
      & \unspane\,A_k\,A\,t_k\,\gamma_I\,K\,(k_{I}\circ f\circ g) = \\
      & A_k\,K\,(\gamma_I[f\circ g]),
    \end{alignat*}
    this is why we used the leg of the span $t_k$.

    The equation $\El\,(\ke_\U)[\epsilon,\unspane\,A_k\,A\,t_k] = A_k$
    holds by definition of $\ke_\U$ which becomes
    $\blank[k_I]_\U$. The equation
    $\foralle(\El\,\q)[\epsilon,\unspane\,A_k\,A\,t_k] = A$ also
    holds by definition, here we rely on the fact that $\El$ is
    applies the $\id$ morphism to the code for the type
    inside. Finally,
    $\ke_{\El\,\q}[(\epsilon,\unspane\,A_k\,A\,t_k)\uparrow] = t_k$
    comes from the definition of restriction for
    $\unspane\,A_k\,A\,t_k\,\gamma_I$ which applies $t_k$ for
    morphisms ending in $k$.

    Note that this model \emph{does not} justify the other roundtrip,
    so $\unspane$ is not an isomorphism, just a section. Given an $a_2
    : \Tm\,\Gamma\,(\foralle\U[\epsilon])$, we don't necessarily have
    \[
    \unspane\,(\El\,\ke_U[\epsilon,a_2])\,(\foralle(\El\,\q)[\epsilon,a_2])\,(\ke_{\El\,q}[(\epsilon,a_2)\uparrow]) = a_2.
    \]
    Given a $\gamma_I:\Gamma\,I$ we have that $a_2\,\gamma_I$ is in $\Ty\,(\y\,(1+I))$, but
    \begin{alignat*}{10}
      & \unspane\,(\dots)\,(\foralle(\El\,\q)[\epsilon,a_2])\,(\dots)\,\gamma_I\,(J_0+1+J_1)\,(\suc\,f\circ(\sym_{J_0,1}|\id_{J_1})) = \\
      & \foralle(\El\,\q)[\epsilon,a_2]\,(J_0+J_1)\,(\gamma_I[f]) = \\
      & \El\,q[\epsilon,a_2]\,(1+J_0+J_1)\,(\gamma_I[f]) = \\
      & a_2\,(\gamma_I[f]_\Gamma)\, (1+J_0+J_1)\,\id = \\
      & (a_2\,\gamma_I)[f]_{\foralle\U}\, (1+J_0+J_1)\,\id = \\
      & (a_2\,\gamma_I)[\suc\,f]_\U \,(1+J_0+J_1)\,\id = \\
      & (a_2\,\gamma_I)[\y\,(\suc\,f)]\, (1+J_0+J_1)\,\id = \\
      & a_2\,\gamma_I \,(1+J_0+J_1)\,(\suc\,f) \neq \\
      & a_2\,\gamma_I\,(J_0+1+J_1)\,(\suc\,f\circ(\sym_{J_0,1}|\id_{J_1})).
    \end{alignat*}
    Note the inequality in the penultimate line.
  \item Preservation of booleans is trivial.\qedhere
  \end{enumerate}
\end{proof}

\section{Isomorphism of the local and global syntaxes}
\label{sec:iso}

In this section we construct a model of the local theory using the
global syntax and show that it is isomorphic to the local syntax. And
similarly, we construct a model of the global theory using the local
syntax.

\subsection{Defining the Local Syntax Using the Global Syntax}
\label{sec:usingglobal}

The types in the local theory correspond to the types in a fixed context $\Gamma$ of the core theory. To define the local $\forall$ from the global $\foralle$, we apply $\foralle$ and apply the substitution that goes back to $\Gamma$:
\begin{align*}
    & \forall : \Ty\,\Gamma \ra \Ty\,\Gamma\\
    & \forall A = (\foralle A)[\Re_{\Gamma}]
\end{align*}
All the other operations follow the same idea:
\begin{alignat*}{10}
    & \ap : \Tm\,(\Gamma\ext A)\,(B[\p]) \ra \Tm\,(\Gamma\ext \forall A)\,(\forall B[\p]) \hspace{2.5em} && k_{\blank} : (A:\Ty\,\Gamma)\ra\Tm\,(\Gamma\ext\forall A)\,(A[\p])\\
    & \ap\,t := (\foralle t)[\Re_{\Gamma}\uparrow] && k_A := \q[\ke_{\Gamma\ext A}][\Re_{\Gamma}\uparrow]\\
    & \foralld : \Ty\,(\Gamma\ext A)\,B \ra \Ty\,(\Gamma\ext \forall A)\,(\forall B) && \R_{\blank} : (A:\Ty\,\Gamma)\ra\Tm\,(\Gamma\ext A)\,(\forall A[\p]) \\
    & \foralld B := (\foralle B)[\Re_{\Gamma}\uparrow] && \R_A := \q[\Re_{\Gamma\ext A}]\\
    & \apd : \Tm\,(\Gamma\ext A)\,B \ra \Tm\,(\Gamma\ext \forall A)\,(\foralld B) && \S_{\blank} : (A:\Ty\,\Gamma)\ra\Tm\,(\Gamma\ext\forall(\forall A))\,(\forall(\forall A)[\p])\\
    & \apd\,t := (\foralle t)[\Re_{\Gamma}\uparrow] && \S_A := \q[\Se_{\Gamma\ext A}][\Re_{\foralle \Gamma}\circ \Re_{\Gamma}\uparrow]
\end{alignat*}
The equations of the local theory follow from the corresponding equations in the global theory and some naturality principles. We give an example:
\begin{align*}
    k_A[\p,\R_A] & = \q[\ke_{\Gamma\ext A}][\Re_{\Gamma}\uparrow][\p,\q[\Re_{\Gamma\ext A}]]\\
                  & = \q[\ke_{\Gamma\ext A}][\Re_{\Gamma}\circ \p,\q[\Re_{\Gamma\ext A}]]\\
                  & = \q[\ke_{\Gamma\ext A}][\p\circ \Re_{\Gamma\ext A},\q[\Re_{\Gamma\ext A}]]  & \textnormal{(naturality of $\Re$)}\\
                  & = \q[\ke_{\Gamma\ext A}][\Re_{\Gamma\ext A}]\\
                  & = \q & \textnormal{(corresponding global equation)}
\end{align*}
$\unspan$ is simply $\unspane$ and $\mkpi\,t_k\,t := \mkpie\,(\Re_\Gamma\uparrow)\,t_k\,t$.

\subsection{Defining the Global Syntax Using the Local Syntax}
\label{sec:usinglocal}

We have an operation $\forall$ on types and we want to extend it to contexts. Since the syntax is contextual, it is natural to do an induction to define $\foralle$. In fact, since we have strict democracy, we have another natural way to procede which is to apply $\forall$ on the type corresponding to the context, so we need a mutual induction to make the link between these two, by defining an isomorphism between them. 
\begin{alignat*}{10}
  & \foralle : \Con &&{}\ra\Con && \foralle^\cong : (\Gamma:\Con)&&{}\ra \foralle\Gamma \cong \diamond\ext\forall(\K\,\Gamma) \\
  & \foralle \diamond && {}:= \diamond && \foralle^\cong \diamond &&{}:= (\epsilon,\tt)\\
  & \foralle (\Gamma\ext A) &&{}:= \foralle \Gamma\ext\foralld(A[\q])[\foralle^\cong \Gamma] \hspace{5em} && \foralle^\cong (\Gamma\ext A) &&{}:= (\epsilon, (\q[\foralle^\cong \Gamma \circ \p],\q))\\
  & && && {\foralle^\cong}^{-1} (\Gamma\ext A)&&{}:= ({\foralle^\cong}^{-1} \Gamma\circ(\p,\pi_1\,\q),\pi_2\,\q)
\end{alignat*}
Here we used $\q:\Tm\,(\diamond\ext \K\,\Gamma)\,(\K\,\Gamma)$ as a substitution $\Sub\,(\diamond\ext \K\,\Gamma)\,\Gamma$, as a consequence of strict democracy. In fact, it is one direction of an isomorphism:
$$\q:\diamond\ext \K\,\Gamma \cong \Gamma : (\epsilon,\id) $$
We also have to do an induction for substitutions.
\begin{alignat*}{10}
& \rlap{$\foralle : \{\Gamma:\Con\}(\sigma:\Sub\,\Delta\,\Gamma)\ra \Sub\,(\foralle\Delta)\,(\foralle\Gamma)$} \\
& \foralle\,\{\diamond\}\,&& \sigma &&{}:= \epsilon  \\
& \foralle\,\{\Gamma\ext A\}\,&&(\sigma,t)&&{}:=\big(\foralle\,\{\Gamma\}\,\sigma,\apd\,(t[\q])[\foralle^\cong \Delta]\big)
\end{alignat*}
We can also prove a naturality property for $\foralle^\cong$:
$$\foralle^= : (\Gamma:\Con)(\sigma:\Sub\,\Delta\,\Gamma)\ra \foralle^\cong\,\Gamma\circ\foralle\sigma=(\p,\ap(\sigma[\q]))\circ\foralle^\cong\,\Delta$$

\noindent The definition of $\foralle$ on types and terms is enforced:
$$\foralle A:= \foralld(A[\q])[\foralle^\cong \Gamma] \hspace{10em} \foralle t := \apd\,(t[\q])[\foralle^\cong \Delta]$$

\noindent Finally we define:
$$\ke_\Gamma:= k_{\K\,\Gamma}[\foralle^\cong \Gamma]
    \qquad\qquad \Re_\Gamma:= {\foralle^\cong}^{-1}\Gamma\circ(\epsilon,\ap\,\p[\id,\tt])
    \qquad\qquad \Se_\Gamma:= \q\circ(\p,\S_{\K\,\Gamma})\circ (\epsilon,\id)
$$
and 
$\mkpie\,\sigma\,t_k\,t := \mkpi\,(\q[\foralle^\cong\,\Gamma\circ\sigma])\,t_k\,t$.
Once again, $\unspane$ is straightforward, and the equations come from the corresponding equations in the local theory.

\subsection{Roundtrips}
\label{sec:roundtrips}

The 9 operations of the local syntax coincide with the same ones after
globalising and localising them. The proofs essentially rely on manipulations on substitutions and naturality properties.\\
\\
In addition, the 9 operations of the global syntax coincide with the same ones
after localising and globalising them. To prove that the $\foralle'$ that we define by induction is the same as the initial $\foralle$, we do a mutual induction proving that $\foralle^\cong \Gamma=(\epsilon,\id)$. The proof for substitutions is also an induction, the others rely on naturality. 

\subsection{Isomorphism}
\label{sec:iso_summary}

Subsection \ref{sec:usingglobal} says that the global syntax $\e\Syn$
is also a model of the local theory. Thus we obtain a map $\alpha$
from the local syntax to this global model. Note that it is identity
on the core calculus (Figure \ref{fig:core} or Definition
\ref{def:coreExt}). Similarly, Subsection \ref{sec:usinglocal}
decorates the local syntax with a model of the global syntax providing
a morphism $\beta$ from $\e\Syn$ to the local syntax. By induction by
$\Syn$ and $\e\Syn$, we prove the compositions of $\alpha$ and $\beta$
are identities. The only nontrivial cases of this induction are
handled in Subsection \ref{sec:roundtrips}. Thus we obtain e.g. that
for all $\Gamma:\Con_\Syn$, $\beta\,(\alpha\,\Gamma) = \Gamma$, and so on.

\section{Gluing for the global theory}
\label{sec:gluing}

In this section we extend the gluing proof of
\cite{DBLP:conf/rta/KaposiHS19} to the global theory. We first define
the notion of weak morphism which also has to respect $\foralle$, then
we define the gluing model. Finally we show how to make use of gluing
by defining a global section functor from the syntax of the global
theory $\e\Syn$ to the presheaf model $\PSh(\square)$ (Problem
\ref{con:psh_global}), and applying gluing to this, which shows that
our theory enjoys canonicity.

\subsection{Weak Morphism of Models Respecting $\foralle$}
\label{sec:weak_morph}

A weak morphism is a functor which has an action on types and terms
and preserves instantiation of substitution strictly and we require
that the empty context and context extension are preserved up to
isomorphism. Furthermore, we require that the functor preserves
$\foralle$, $\ke$, $\Re$ and $\Se$ strictly.

\begin{definition}[Weak morphisms of models of the global theory]
  Given two models of the global theory $\mathcal{C}$ and
  $\mathcal{D}$, a weak morphism $F$ from $\mathcal{C}$ to
  $\mathcal{D}$ contains the following functions and satisfies the
  following equations. We usually omit the subscripts saying which
  model is meant because it is clear from the context, and we also
  overload the names for the four maps in $F$.
  \begin{alignat*}{10}
    & \text{Maps} && \text{Preservation of core structure} && \text{Preservation of $\foralle,\ke,\Re,\Se$} \\
    & F : \Con_{\mathcal{C}} \ra \Con_{\mathcal{D}}                               && F\,(\sigma\circ\rho) = F\,\sigma\circ F\,\rho     && F\,(\foralle\,\Gamma) = \foralle(F\,\Gamma) \\
    & F : \Sub\,\Delta\,\Gamma \ra \Sub\,(F\,\Delta)\,(F\,\Gamma) \hspace{2em} && F\,\id = \id                                      && F\,(\foralle\,\sigma) = \foralle(F\,\sigma) \\
    & F : \Ty\,\Gamma \ra \Ty\,(F\,\Gamma)                                     && F_\epsilon^{-1} : \Sub\,\diamond\,(F\,\diamond)     && F\,(\foralle\,A) = \foralle(F\,A) \\
    & F : \Tm\,\Gamma\,A \ra \Tm\,(F\,\Gamma)\,(F\,A)                          && F_\epsilon^{-1}\circ\epsilon = \id_{F\,\diamond}      && F\,(\foralle\,t) = \foralle(F\,t) \\
    &                                                                          && F\,(A[\sigma]) = F\,A[F\,\sigma]                   && F\,\ke_\Gamma = \ke_{F\,\Gamma}               \\
    &                                                                          && F\,(t[\sigma]) = F\,t[F\,\sigma]                   && F\,\Re_\Gamma = \Re_{F\,\Gamma}               \\
    &                                                                          && F_{\ext}^{-1} : \Sub\,(F\,\Gamma\ext F\,A)\,(F\,(\Gamma\ext A)) \hspace{2em} && F\,\Se_\Gamma = \Se_{F\,\Gamma} \\
    &                                                                          && F_{\ext}^{-1}\circ(F\,\p,F\,\q) = \id                           \\
    &                                                                          && (F\,\p,F\,\q)\circ F_{\ext}^{-1} = \id                          
  \end{alignat*}
\end{definition}
For any such $F$ we can define comparison maps expressing that $\K$,
$\top$, $\Sigma$, $\Eq$ are preserved automatically up to isomorphism,
$\Pi$ and $\U$ are preserved in a lax way and $\Bool$ in an oplax
way. The specification is on the left, the implementations are on the
right.
\begin{alignat*}{10}
  & F_\K && {}: F\,(\K\,\Theta) \cong \K\,(F\,\Theta) && F_\K && {}:= F\,\q\circ F_{\ext}^{-1} \\
  & F_\Sigma && {}: F\,(\Sigma\,A\,B) \cong \Sigma\,(F\,A)\,(F\,B[F_{\ext}^{-1}]) && F_\Sigma && {}:= (F\,(\pi_1\,q), F\,(\pi_2\,q))[F_{\ext}^{-1}]\\
  & F_\Eq && {}: F\,(\Eq_A\,u\,v) \cong \Eq_{F\,A}\,(F\,u)\,(F\,v) && F_\Eq && {}:= \refl[F_{\ext}^{-1}] \\
  & F_\Pi && {}: \Tm\,(F\,\Gamma\ext F\,(\Pi\,A\,B)) (\Pi\,(F\,A)\,(F\,B[F_{\ext}^{-1}])[\p]) \hspace{2em} && F_\Pi && {}:= \lam\,(F\,(\app\,\q)[F_{\ext}^{-1}])[F_{\ext}^{-1}] \\
  & F_\U && {}: \Tm\,(F\,\Gamma\ext F\,\U)\,\U && F_\U && {}:= \c(F\,(\El\,\q)[F_{\ext}^{-1}]) \\
  & F_\Bool^{-1} && {}: \Tm\,(F\,\Gamma\ext\Bool)\,(F\,\Bool[\p]) && F_\Bool^{-1} && {}:= \ite\,\q\,(F\,\true[\p])\,(F\,\false[\p])
\end{alignat*}
We note that a consequence of preservation of $\foralle$ is that $\foralle F_{\ext}^{-1} = F_{\ext}^{-1}$.

\subsection{The Gluing Displayed Model}
\label{sec:dispmodel}

A displayed model over a base model is equivalent to a model with a
morphism into the base. It can also be seen as the collection of
motives and methods for the induction principle of the syntax (the initial
model). The components of a displayed model can be computed for any
generalised algebraic theory using the methods in
\cite{DBLP:journals/pacmpl/KaposiKA19}. We mark displayed components
with a bullet. For example, displayed contexts are a set
indexed over contexts: $\Con^\bullet : \Con\ra\Set$ where $\Con$ is
the set of contexts of the base model. Displayed types are indexed
implicitly over a base context, a displayed context at the base
context and a base type:
$\Ty^\bullet:\{\Gamma:\Con\}\ra\Con^\bullet\,\Gamma\ra\Ty\,\Gamma\ra\Set$. The
displayed variants of equations are well-typed because of the
corresponding equations in the base model.

\begin{problem}[Gluing]
  Given a weak morphism $F$ from $\mathcal{C}$ to $\mathcal{D}$, we
  construct a displayed model of the global theory over $\mathcal{C}$.
\end{problem}
\begin{proof}[Construction]
Gluing of a weak morphism from $\mathcal{C}$ to $\mathcal{D}$ gives a
displayed model over $\mathcal{C}$. This model construction is a
generalisation of logical predicates. A displayed context is a
predicate over $F$ applied to the context, a displayed type is a
dependent predicate over $F$ applied to the type. A displayed
substitution/term says expresses that $F$ of the substitution/term
respects the predicates.
\begin{alignat*}{10}
  & \Con^\bullet\,\Gamma && {} := \Ty\,(F\,\Gamma) && \Ty^\bullet\,\Gamma^\bullet\,A && {}:= \Ty\,(F\,\Gamma\ext\Gamma^\bullet\ext F\,A[\p]) \\
  & \Sub^\bullet\,\Delta^\bullet\,\Gamma^\bullet\,\sigma && {}:= \Tm\,(F\,\Delta\ext\Delta^\bullet)\,(\Gamma^\bullet[F\,\sigma\circ\p]) \hspace{3em} && \Tm^\bullet\,\Gamma^\bullet\,A^\bullet\,t && {}:= \Tm\,(F\,\Gamma\ext\Gamma^\bullet)\,(A^\bullet[\id,F\,t[\p]])
\end{alignat*}
The core calculus is glued using the same technique as the standard
model: composition and substitution are modelled by substitution,
context extension by $\Sigma$ types.
\begin{alignat*}{10}
  & \sigma^\bullet\circ^\bullet\rho^\bullet && {}:= \sigma^\bullet[F\,\rho\circ\p,\rho^\bullet]     && t^\bullet[\sigma^\bullet]^\bullet && {}:= t^\bullet[F\,\sigma\circ\p,\sigma^\bullet] \\                                           
  & \id^\bullet && {}:= \q                                                                          && \Gamma^\bullet\mathbin{{\ext}^\bullet} A^\bullet && {}:= \Sigma\,(\Gamma^\bullet[F\,\p])\,(A^\bullet[F\,\p\circ\p,\q,F\,\q[\p]]) \\
  & \diamond^\bullet && {}:= \top                                                                   && (\sigma^\bullet ,^\bullet t^\bullet) && {}:= (\sigma^\bullet , t^\bullet) \\                                                      
  & \epsilon^\bullet && {}:= \tt                                                                    && \p^\bullet && {}:= \pi_1\,\q \\                                                                                                   
  & A^\bullet[\sigma^\bullet]^\bullet && {}:= A^\bullet[(F\,\sigma\circ\p,\sigma^\bullet)\uparrow]  \hspace{7.5em} && \q^\bullet && {}:= \pi_2\,\q                                                                                                      
\end{alignat*}
The type formers $\K$, $\top$, $\Sigma$, $\Eq$, $\Pi$, $\U$ and
$\Bool$ in the gluing displayed model are defined as
follows. $\K^\bullet$ simply returns its argument context,
$\top^\bullet$, $\Sigma^\bullet$, $\Eq^\bullet$ are
pointwise. $\Pi^\bullet$ expresses that if the predicate holds for an
input, it holds for the output. $\U^\bullet$ gives predicate space,
$\Bool^\bullet$ for a $b$ in $F\,\Bool$ says that there is a boolean
which is equal to $b$ when mapped using $F_\Bool^{-1}$. The
definitions are straightforward, but technical: adjustments using
$F_{\ext}^{-1}$ and the comparison maps (e.g.\ $F_{\Sigma}$, $F_\Pi$)
have to make things match. The definitions below satisfy all the
equations of the displayed model.
\begin{alignat*}{10}
  & \K^\bullet\,\Theta^\bullet && {}:= \Theta^\bullet[F_\K][F\,\epsilon\circ\p^2,\q] \\
  & \top^\bullet && {}:= \top \\
  & \tt^\bullet && {}:= \tt \\
  & \Sigma^\bullet\,A^\bullet\,B^\bullet[\gamma,\gamma^\bullet,F_\Sigma^{-1}[\gamma,(a,b)]] &&{}:= \Sigma\,(A^\bullet[\gamma,\gamma^\bullet,a])\,(B^\bullet[F_{\ext}^{-1}\circ(\gamma,a)\circ\p, (\gamma^\bullet[\p],\q),b[\p]]) \\
  & (u^\bullet ,^\bullet v^\bullet) &&{}:= (u^\bullet , v^\bullet) \\
  & \pi_1^\bullet\,t^\bullet && {}:= \pi_1\,t^\bullet \\
  & \pi_2^\bullet\,t^\bullet && {}:= \pi_2\,t^\bullet \\
  & \Eq^\bullet_{A^\bullet}\,u^\bullet\,v^\bullet && {}:= \Eq_{A^\bullet[\p,F\, u[\p^2]]}\,(u^\bullet[\p])\,(v^\bullet[\p]) \\
  & \refl^\bullet && {}:= \refl
  \\
  & \Pi^\bullet\,A^\bullet\,B^\bullet[\gamma,\gamma^\bullet,t] &&{}:= \Pi\,(F\,A[\gamma])\,\big(\Pi\,(A^\bullet[(\gamma,\gamma^\bullet)\uparrow])\,\\
  & && \hspace{2em} (B^\bullet[F_{\ext}^{-1}\circ(\gamma\uparrow)\circ\p,(\gamma^\bullet[\p^2],\q),\app\,(F_\Pi[\gamma,t])[\p]])\big) \\
  & \lam^\bullet\,t^\bullet && {}:= \lam,\big(\lam\,(t^\bullet[F_{\ext}^{-1}\circ(\p^3,1),(2,0)])\big) \\
  & \app^\bullet\,t^\bullet && {}:= \app,(\app\,t)[F\,\p\circ\p,\pi_1\,q,F\,\q[\p],\pi_2\,\q] \\
  & \U^\bullet[\gamma,\gamma^\bullet,a] && {}:= \El\,(F_\U[\gamma,a]) \Ra \U \\
  & \El^\bullet\,a^\bullet && {}:= \El\,(\app\,a^\bullet) \\
  & \c^\bullet\,A^\bullet && {}:= \lam\,(\c\,A^\bullet) \\
  & \Bool^\bullet[\gamma,\gamma^\bullet,b] && {}:= \Sigma\,\Bool\,(\Eq_{F\,\Bool[\gamma][\p]}\,(F_\Bool^{-1}[\gamma\uparrow])\,(b[\p])) \\
  & \true^\bullet && {}:= (\true,\refl) \\
  & \false^\bullet && {}:= (\false,\refl) \\
  & \ite^\bullet\,t^\bullet\,u^\bullet\,v^\bullet && {}:= \ite\,(\pi_1\,t^\bullet)\,u^\bullet\,v^\bullet
\end{alignat*}
The displayed versions of the $\foralle$, $\ke$, $\Re$ and $\Se$
operators are defined as follows.
\[
\foralle^\bullet\Gamma^\bullet := \foralle\Gamma^\bullet \hspace{4em}
\foralle^\bullet \sigma^\bullet := \foralle\sigma^\bullet \hspace{4em}
\foralle^\bullet A^\bullet := \foralle A^\bullet \hspace{4em}
\foralle^\bullet t^\bullet := \foralle t^\bullet
\]
\[
{\ke^\bullet}_{\Gamma^\bullet} := \q[\ke_{F\,\Gamma\ext\Gamma^\bullet}] \hspace{4em}
{\Re^\bullet}_{\Gamma^\bullet} := \q[\Re_{F\,\Gamma\ext\Gamma^\bullet}] \hspace{4em}
{\Se^\bullet}_{\Gamma^\bullet} := \q[\Se_{F\,\Gamma\ext\Gamma^\bullet}]
\]
$\foralle^\bullet\Gamma^\bullet$ has to be in
$\Ty\,(F\,(\foralle\Gamma))$ and $\foralle\Gamma^\bullet$ is in
$\Ty\,(\foralle(F\,\Gamma))$, but these are equal because $F$ respects
$\foralle$ strictly. The situation is similar for the other operators.

For $\mkpie^\bullet\,\sigma^\bullet\,t_k^\bullet\,t^\bullet$, we need a term in
\begin{alignat*}{10}
  & \Tm\,(F\,\Delta\ext\Delta^\bullet)\,\big(\foralle(\Pi^\bullet\,A^\bullet\,B^\bullet)[F\,\sigma\circ\p,\sigma^\bullet,F\,(\mkpie\,\sigma\,t_k\,t)[\p]]\big) & = \\
  & & \hspace{-10em}\text{(introduce abbreviation $\rho$)} \\
  & \Tm\,(F\,\Delta\ext\Delta^\bullet)\,\big(\foralle(\Pi^\bullet\,A^\bullet\,B^\bullet)[\rho]\big) & = \\
  & \Tm\,(F\,\Delta\ext\Delta^\bullet)\,\bigg(\foralle\Big(\Pi\,(F\,A[\p^2])\,\big(\Pi\,(A^\bullet[\p^2,\q])\,(B^\bullet[F_{\ext}^{-1}\circ(\p^4,1),(3,0),F_{\Pi}[\p^4,2]\oldapp 1])\big)\Big)[\rho]\bigg) &\cong \\
  & & \hspace{-6em}\text{(curry-uncurry)} \\
  & \Tm\,(F\,\Delta\ext\Delta^\bullet)\,\bigg(\foralle\Big(\Pi\,\big(\Sigma\,(F\,A[\p^2])\,(A^\bullet[\p^2,\q])\big)\,\big(B^\bullet[F_{\ext}^{-1}\circ(\p^3,\pi_1\,0),(2,\pi_2\,0),F_{\Pi}[\p^3,1]\oldapp \pi_1\,0]\big)\Big)[\rho]\bigg).
\end{alignat*}
We can directly apply $\mkpie$ to build an element of this latter type
reusing the uncurried versions of $t_k^\bullet$ and $t^\bullet$, respectively.
\[
w := \mkpie\,\rho\,\Big(\lam\,\big(\app\,(\app\,t_k^\bullet)[\p,\pi_1\,\q,\pi_2\,\q]\big)\Big)\Big(\lam\,\big(\app\,(\app\,t^\bullet)[\p,\pi_1\,\q,\pi_2\,\q]\big)\Big)
\]
Now we curry the resulting term under $\foralle$ to obtain the definition of $\mkpie^\bullet$:
\[
\mkpie^\bullet\,\sigma^\bullet\,t_k^\bullet\,t^\bullet := \foralle(\lam\,(\lam\,(2\oldapp(1,0))))[\rho, w]
\]
For $\unspane^\bullet\,A_k^\bullet\,A^\bullet\,t_k^\bullet$, we need a term in
\begin{alignat*}{10}
  & \Tm\,(F\,\Gamma\ext\Gamma^\bullet)\,(\foralle^\bullet\,\U^\bullet[\epsilon^\bullet]^\bullet[\id,F\,(\unspan\,A_k\,A\,t_k)[\p]]) & = \\
  & \Tm\,(F\,\Gamma\ext\Gamma^\bullet)\,\Big(\foralle\big(F\,(\El\,\q)[F_{\ext}^{-1}]\Ra\U\big)\big[F\,\epsilon\circ\p,F\,(\unspan\,A_k\,A\,t_k)[\p]\big]\Big) & = \\
  & & \hspace{-3em}\text{(introduce abbreviation $\rho$)} \\
  & \Tm\,(F\,\Gamma\ext\Gamma^\bullet)\,\Big(\foralle\big(F\,(\El\,\q)[F_{\ext}^{-1}]\Ra\U\big)[\rho]\Big),
\end{alignat*}
which is the span of a predicate. We already have predicates on
$F\,A_k$ and $F\,A$ with a map from witnesses of the second one to the
first one that lies over $F\,t_k$:
\begin{alignat*}{10}
  & A_k^\bullet && {}: \Ty\,(F\,\Gamma\ext\Gamma^\bullet\ext F\,A_k[\p]) \\
  & A^\bullet && {}: \Ty\,(F\,\Gamma\ext\Gamma^\bullet\ext F\,A[\p]) \\
  & t_k^\bullet && {}: \Tm\,(F\,(\Gamma\ext A)\ext\Gamma^\bullet\mathbin{{\ext}^\bullet} A^\bullet)\,(A_k^\bullet[F\,\p\circ\p,\pi_1\,\q,F\,t_k[\p]])
\end{alignat*}
The way to make terms in $\foralle$ of a function space is by
$\mkpie$. The function (the predicate) at the base is essentially
$A_k^\bullet$, the predicate at the apex is given by
$\unspane\,(A_k^\bullet[\dots])\,A^\bullet\,(t_k^\bullet[\dots])$
where $A_k^\bullet$ and $t_k^\bullet$ have to be substituted to plug
in their dependencies properly.
\begin{alignat*}{10}
  & \unspane^\bullet\,A_k^\bullet\,A^\bullet\,t_k^\bullet := \\
  & \hspace{2em}\mkpie\,\rho\,\big(\lam\,(\c\,A_k^\bullet)\big)\,\bigg(\lam\,\Big(\unspane\, && \big(A_k^\bullet[\p, F\,t_k[F_{\ext}^{-1}][\p^2,\q]]\big)\,A^\bullet\,\big(t_k^\bullet[F_{\ext}^{-1}\circ(\p^3,1),(2,0)]\big)\Big)\bigg)
\end{alignat*}

The equations for the global syntax follow from the preservation
properties of the weak morphism. Here is a typical example:
\begin{alignat*}{10}
  & \foralle^\bullet\ke^\bullet_{\Gamma^\bullet}  & = \\
  & \q[\foralle\ke_{F\,\Gamma\ext\Gamma^\bullet}] & = \\
  & & \hspace{-5em}\text{(one of the five equations)} \\
  & \q[\ke_{\foralle(F\,\Gamma)\ext\foralle\Gamma^\bullet}\circ\Se_{F\,\Gamma\ext\Gamma^\bullet}] & = \\
  & \q[\ke_{\foralle(F\,\Gamma)\ext\foralle\Gamma^\bullet}][\Se_{F\,\Gamma\ext\Gamma^\bullet}] & = \\
  & & \hspace{-5em}\text{($F$ commutes with $\foralle$ on contexts)} \\
  & \q[\ke_{F\,(\foralle\Gamma)\ext\foralle\Gamma^\bullet}][\Se_{F\,\Gamma\ext\Gamma^\bullet}] & = \\
  & \q[\ke_{F\,(\foralle\Gamma)\ext\foralle\Gamma^\bullet}][\p\circ\Se_{F\,\Gamma\ext\Gamma^\bullet},\q[\Se_{F\,\Gamma\ext\Gamma^\bullet}]] & = \\
  & \q[\ke_{F\,(\foralle\Gamma)\ext\foralle\Gamma^\bullet}][\foralle(\foralle\p)\circ\Se_{F\,\Gamma\ext\Gamma^\bullet},\q[\Se_{F\,\Gamma\ext\Gamma^\bullet}]] & = \\
  & & \hspace{-5em}\text{(naturality of $\Se$)} \\
  & \q[\ke_{F\,(\foralle\Gamma)\ext\foralle\Gamma^\bullet}][\Se_{F\,\Gamma}\circ\foralle(\foralle\p),\q[\Se_{F\,\Gamma\ext\Gamma^\bullet}]] & = \\
  & \q[\ke_{F\,(\foralle\Gamma)\ext\foralle\Gamma^\bullet}][\Se_{F\,\Gamma}\circ\p,\q[\Se_{F\,\Gamma\ext\Gamma^\bullet}]] & = \\
  & & \hspace{-5em}\text{($F$ commutes with $\Se$)} \\
  & \q[\ke_{F\,(\foralle\Gamma)\ext\foralle\Gamma^\bullet}][F\,\Se_\Gamma\circ\p,\q[\Se_{F\,\Gamma\ext\Gamma^\bullet}]] & = \\
  & \ke^\bullet_{\forall^\bullet\Gamma^\bullet}\circ^\bullet\Se^\bullet_{\Gamma^\bullet} 
\end{alignat*}
This completes the construction of the gluing displayed model.
\end{proof}

\subsection{The Global Section Functor}
\label{sec:global_section}

We define a way to interpret morphisms in $\square$ in our syntax (see
Definition \ref{def:square}). The substitution $\ll f\rr_\Gamma$ for
an $f:\square(J,I)$ is defined mutually with three equations by
induction on $f$:
\begin{alignat*}{10}
  & \ll f\rr_\Gamma :{} && \Sub\,(\foralle^I\Gamma)\,(\foralle^J\Gamma) \\
  & && \cul f\cur_{\foralle^n\Gamma}\circ\foralle^{I+n}\ke_\Gamma = \foralle^{J+n}\ke_\Gamma\circ\cul f\cur_{\foralle^{1+n}\Gamma} \\
  & && \foralle^{J+n}\Re_\Gamma\circ\cul f\cur_{\foralle^n\Gamma} = \cul f\cur_{\foralle^{1+n}\Gamma}\circ\foralle^{I+n}\Re_\Gamma \\
  & && \foralle^{J+n}\Se_\Gamma\circ\cul f\cur_{\foralle^{2+n}\Gamma} = \cul f\cur_{\foralle^{2+n}\Gamma}\circ\foralle^{I+n}\Se_\Gamma
\end{alignat*}
$\foralle^I\,\Gamma$ denotes the $I$ times iteration of $\foralle$ on
$\Gamma$. The different cases\footnote{An alternative definition of
$\cul\blank\cur_\Gamma$ uses $\ll\suc\,f\rr_\Gamma = \foralle\ll
f\rr_\Gamma$, $\ll k_I\rr_\Gamma = \ke_{\foralle^I\Gamma}$, and so
on. This version generates a global section functor which only
satisfies $\G\,(\foralle\Gamma) \cong \foralle(\G\,\Gamma)$ up to
isomorphism. An isomorphism is enough to build a gluing model, however
it is much more tedious than our current, stricter approach.} of $\cul\blank\cur$:
\begin{alignat*}{10}
  & \ll f\circ g\rr_\Gamma && {}:= \ll g\rr_\Gamma\circ\ll f\rr_\Gamma \hspace{5em} && \ll\suc\,f\rr_\Gamma && {}:= \ll f\rr_{\foralle\Gamma} \hspace{8em} &&    \ll \R_I\rr_\Gamma && {}:= \foralle^I\Re_{\Gamma} \\ 
  & \ll\id\rr_\Gamma && {}:= \id &&                                                   \ll k_I\rr_\Gamma && {}:= \foralle^I\ke_{\Gamma} &&        \ll \S_I\rr_\Gamma && {}:= \foralle^I\Se_{\Gamma}    
\end{alignat*}
The fact that $\ll\blank\rr_\Gamma$ preserves the equations for the
morphisms in $\square$ is direct except for the three naturality
equations $k_I\circ f = \suc\,f\circ k_J$, $\R_I\circ\suc\,f =
f\circ\R_J$ and $\S_I\circ\suc\,(\suc\,f) = \suc\,(\suc\,f)\circ\S_J$
which follow from the $n=0$ cases of the above three equations.

We also prove by induction on $f$ that for any $\sigma$, $ \cul
f\cur_\sigma : \cul f\cur_\Gamma\circ\foralle^I\sigma =
\foralle^J\sigma\circ\cul f\cur_\Delta.  $

Now we construct the weak morphism on which we will glue for canonicity. This can be seen as a cubical nerve functor.
\begin{problem}[Global section functor]
  We construct a weak morphism $\G$ from $\e\Syn$ to $\PSh(\square)$.
\end{problem}
\begin{proof}[Construction]
\begin{alignat*}{10}
  & \G : \Con_{\e\Syn} \ra \Con_{\PSh(\square)} \hspace{8em}&&                         \G : \Sub_{\e\Syn}\,\Delta\,\Gamma \ra \Sub_{\PSh(\square)}\,(\G\,\Delta)\,(\G\,\Gamma) \\
  & \G\,\Gamma\,I := \Sub_{\e\Syn}\,\diamond\,(\foralle^I\,\Gamma) &&      \G\,\sigma\,\delta^I := \foralle^I\,\sigma\circ\delta^I                                 \\
  & \gamma^I[f]_{\G\,\Gamma} := \ll f\rr_\Gamma\circ\gamma^I 
\end{alignat*}
The functor laws for $\G\,\Gamma$ follow from the definition of
$\cul\blank\cur_\Gamma$ and from the categorical laws in
$\e\Syn$. Naturality of $\G\,\sigma$ is a consequence of $\cul
f\cur_\sigma$. On types and terms:
\begin{alignat*}{10}
  & \G : \Ty_{\e\Syn}\,\Gamma \ra \Ty_{\PSh(\square)}\,(\G\,\Gamma)   &&         \G : \Tm_{\e\Syn}\,\Gamma\,A \ra \Tm_{\PSh(\square)}\,(\G\,\Gamma)\,(\G\,A) \\
  & \G\,A\,I\,\gamma^I := \Tm_{\e\Syn}\,\diamond\,(\foralle^I\,A[\gamma^I]) \hspace{6em} &&   \G\,t\,\gamma^I := \foralle^I\,t[\gamma^I]                                    \\
  & a^I[f]_{\G\,A} := \q[\cul f\cur_{\Gamma\ext A}][\gamma^I,a^I] 
\end{alignat*}
The functor laws for $\G\,A$ follow directly and naturality for $\G\,t$ is proven as follows.
\begin{alignat*}{10}
  & \G\,t\,\gamma^I[f]_{\G\,A} = \q[\cul f\cur_{\Gamma\ext A}][\gamma^I,\foralle^I t[\gamma^I]] = \q[\cul f\cur_{\Gamma\ext A}][\foralle^I(\id,t)][\gamma^I] \overset{\cul f\cur_\sigma}{=} \q[\foralle^J(\id,t)][\cul f\cur_{\Gamma}][\gamma^I] = \\
  & \q[\id,\foralle^J t][\cul f\cur_{\Gamma}][\gamma^I] = \foralle^J t[\cul f\cur_\Gamma\circ\gamma^I] = \G\,t\,(\gamma^I[f]_{\G\,\Gamma})
\end{alignat*}
It is easy to see that $\G$ is a functor, it is more interesting that it preserves substitution of types:
\begin{alignat*}{10}
  & \G\,(A[\sigma])\,I\,\delta^I = \Tm_{\e\Syn}\,\diamond\,(\foralle^I (A[\sigma])[\delta^I]) = \Tm_{\e\Syn}\,\diamond\,(\foralle^I A[\foralle^I\sigma][\delta^I]) = \G\,A\,I\,(\foralle^I\sigma\circ\delta^I) = \\
  & \G\,A\,I\,(G\,\sigma\,\delta^I) = \G\,A[G\,\sigma]\,I\,\delta^I \\[-0.6em]
  & a^I[f]_{\G\,(A[\sigma])} = \q[\cul f\cur_{\Delta\ext A[\sigma]}][\delta^I,a^I] = \q[\foralle^J(\sigma\uparrow)][\cul f\cur_{\Delta\ext A[\sigma]}][\delta^I,a^I] \overset{\cul f\cur_{\sigma\uparrow}}{=} \\
  & \q[\cul f\cur_{\Gamma\ext A}][\foralle^I(\sigma\uparrow)][\delta^I,a^I] = \q[\cul f\cur_{\Gamma\ext A}][\foralle^I\sigma\circ\delta^I,a^I] = a^I[f]_{\G\,A[\G\,\sigma]}
\end{alignat*}
A similar proof shows that $\G$ preserves term substitution. We define
preservation of $\diamond$ and $\blank\ext\blank$ as follows. In the
latter case we replace a metatheoretic pairing operation with a
syntactic extended substitution former.
\begin{alignat*}{10}
  & \G_\diamond^{-1} : \Sub_{\PSh(\square)}\,\diamond\,(\G\,\diamond) \hspace{5em} && \G_{\ext}^{-1} : \Sub_{\PSh(\square)}\,(\G\,\Gamma\ext\G\,A)\,(\G\,(\Gamma\ext A)) \\
  & \G_\diamond^{-1}\,\_ := \epsilon_{\e\Syn} && \G_{\ext}^{-1}\,(\gamma^I,a^I) := (\gamma^I ,_{\ext_{\e\Syn}} a^I)
\end{alignat*}
They obviously satisfy the necessary equations using the universal
properties of $\diamond$ and $\blank\ext\blank$.

We verify that $\G$ preserves $\foralle$ and commutes with $\ke$,
$\Re$, $\Se$. $\G$ commutes with $\forall$ on contexts
\[
\G\,(\foralle\Gamma)\,I = \Sub_{\e\Syn}\,\diamond\,(\foralle^I(\foralle\Gamma)) = \Sub_{\e\Syn}\,\diamond\,(\foralle^{1+I}\Gamma) = \G\,\Gamma\,(1+I) = \foralle(\G\,\Gamma)\,I
\]
\[
\gamma^{1+I}[f]_{\G\,(\foralle\Gamma)} = \cul f\cur_{\foralle\Gamma}\circ\gamma^{1+I} = \cul\suc\,f\cur_{\Gamma}\circ\gamma^{1+I} = \gamma^{1+I}[\suc\,f]_{\G\,\Gamma} = \gamma^{1+I}[f]_{\foralle(\G\,\Gamma)},
\]
substitutions
\[
\G\,(\foralle\sigma)\,\delta^{1+I} = \foralle^I(\foralle\sigma)\circ\delta^{1+I} = \foralle^{1+I}\sigma\circ\delta^{1+I} = \G\,\sigma\,\delta^{1+I} = \foralle(\G\,\sigma)\,\delta^{1+I},
\]
types
\[
\G\,(\foralle A)\,I\,\gamma^{1+I} = \Tm\,\diamond\,(\foralle^{1+I} A[\gamma^{1+I}]) = \G\,A\,(1+I)\,\gamma^{1+I} = \foralle(\G\,A)\,I\,\gamma^{1+I}
\]
\[
a^{1+I}[f]_{\G\,(\foralle A)} = \q[\cul f\cur_{\foralle\Gamma\ext\foralle A}][\gamma^{1+I},a^{1+I}] = \q[\cul\suc\,f\cur_{\Gamma\ext A}][\gamma^{1+I},a^{1+I}] = a^{1+I}[f]_{\foralle(\G\, A)},
\]
and terms
\[
\G\,(\foralle t)\,\gamma^{1+I} = \foralle^I(\foralle t)[\gamma^{1+I}] = \foralle^{1+I} t[\gamma^{1+I}] = \G\,t\,\gamma^{1+I} = \foralle(\G\,t)\,\gamma^{1+I}.
\]
Preservation of $\ke$ is given by
\[
\G\,\ke_\Gamma\,\gamma^{1+I} = \foralle^{I}\ke_\Gamma\circ\gamma^{1+I} = \cul k_I\cur_\Gamma\circ\gamma^{1+I} = \gamma^{1+I}[\ke_I]_{\G\,\Gamma} = \ke_{\G\,\Gamma}\,\gamma^{1+I},
\]
preservation of $\Re$ and $\Se$ are shown analogously.
\end{proof}

\subsection{Reaping the Fruits}

\begin{theorem}[Canonicity]
  In the local or global syntax, for a $t:\Tm\,\diamond\,\Bool$, we have $t=\true$ or $t=\false$.
\end{theorem}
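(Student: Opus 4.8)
The plan is to read off canonicity from the gluing displayed model, just as one does for ordinary Martin-L\"of type theory. First I would instantiate the gluing construction of the previous Problem at the global section functor $\G : \e\Syn \ra \PSh(\square)$, obtaining a displayed model of the global theory over the global syntax $\e\Syn$. Since $\e\Syn$ is the initial model of the global theory, its induction principle (the universal property of displayed models) yields a section; in particular a given closed term $t : \Tm\,\diamond\,\Bool$ is assigned a displayed term
\[
t^\bullet : \Tm^\bullet\,\diamond^\bullet\,\Bool^\bullet\,t \;=\; \Tm_{\PSh(\square)}\,\big(\G\,\diamond \ext \diamond^\bullet\big)\,\big(\Bool^\bullet[\id,\G\,t[\p]]\big),
\]
and by design $\Bool^\bullet$ encodes a witness that $t$ is canonical.

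Next I would evaluate this term at the object $0 \in \square$. We have $\diamond^\bullet = \top$ and $\G\,\diamond\,0 = \Sub_{\e\Syn}\,\diamond\,(\foralle^0\diamond) = \Sub_{\e\Syn}\,\diamond\,\diamond$, so the point $(\epsilon,\tt)$ inhabits $(\G\,\diamond \ext \diamond^\bullet)\,0$. Feeding it to $t^\bullet$ and using that $\Bool$, $\Sigma$ and $\Eq$ are interpreted pointwise in the presheaf model, the definition $\Bool^\bullet[\gamma,\gamma^\bullet,b] = \Sigma\,\Bool\,\big(\Eq_{\G\,\Bool[\gamma][\p]}\,(\G_\Bool^{-1}[\gamma\uparrow])\,(b[\p])\big)$ hands back a pair $(c,e)$, where $c$ is a metatheoretic boolean and $e$ inhabits an $\Eq$-type of $\PSh(\square)$, which unwinds to an equation in the metatheory between the values at stage $0$ of $\G_\Bool^{-1}$ (on the boolean $c$) and of $\G\,t$.

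Then I would compute these two values. By definition $\G\,t\,\epsilon = \foralle^0\,t[\epsilon] = t$, so at stage $0$ with the terminal substitution $\G\,t$ is just $t$, regarded as an element of $\G\,\Bool\,0\,\epsilon = \Tm_{\e\Syn}\,\diamond\,\Bool$; and $\G_\Bool^{-1} = \ite\,\q\,(\G\,\true[\p])\,(\G\,\false[\p])$ evaluates at stage $0$ on $c$ to $\G\,\true\,\epsilon = \true$ when $c$ is the first element and to $\G\,\false\,\epsilon = \false$ otherwise. Hence $e$ gives $t = \true$ or $t = \false$ in $\e\Syn$. For the local syntax I would invoke Section \ref{sec:iso}: the local and global syntaxes are isomorphic as models of the core theory and the isomorphism is the identity on the core calculus of Figure \ref{fig:core}, so it restricts to a bijection between the two copies of $\Tm\,\diamond\,\Bool$ that preserves $\true$, $\false$ and equality, and the statement transfers.

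I do not anticipate any real difficulty: the substantial content — the gluing displayed model with its displayed $\foralle$, $\ke$, $\Re$, $\Se$, $\mkpie$ and $\unspane$ together with the coherence adjustments, and the global section functor $\G$ with its strict preservation of $\foralle$ — is already in place. All that is left is unfolding the $\Bool^\bullet$ predicate in the empty context, and the only point that deserves a second look is that $\G$ acts as the identity on closed terms, which is immediate from $\foralle^0 = \id$ and $\epsilon = \id_\diamond$.
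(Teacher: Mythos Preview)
Your proposal is correct and follows essentially the same approach as the paper: glue along the global section functor $\G$, invoke initiality of $\e\Syn$ to obtain a section, evaluate the displayed boolean at level $0$ with the empty substitution to extract a metatheoretic boolean together with the desired equality, and transfer to the local syntax via the isomorphism of Section~\ref{sec:iso}. Your write-up is in fact slightly more explicit than the paper's in unfolding $\G\,t$ and $\G_\Bool^{-1}$ at stage $0$.
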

\begin{proof}
By gluing along the global section functor, we obtain a displayed
model over the syntax of the global theory, and given a $t :
\Tm_{\e\Syn}\,\diamond\,\Bool$, by induction of $\e\Syn$, we get a
\[
\Tm_{\PSh(\square)}\,(\G\,\diamond\ext\top)\,(\Sigma\,\Bool\,(\Eq\,(\ite\,\q\,(\G\,\true[\p])\,(\G\,\false[\p]))\,(\G\,t[\p]))),
\]
and providing the $\id_{\diamond} : \G\,\diamond\,\diamond$ and the element
of the metatheoretic unit set as input we obtain
\[
(b:\mathbbm{2}) \times \mathsf{if}\,b\,\mathsf{then}\,\true\,\mathsf{else}\,\false = t.
\]
Given a term in the local syntax $t : \Tm_\Syn\,\diamond\,\Bool$, we
have $\alpha\,t : \Tm_{\e\Syn}\,(\alpha\,\diamond)\,(\alpha\,\Bool)$
where $\alpha$ is the map from the global syntax to the local syntax
defined in Section \ref{sec:iso_summary}. As $\alpha$ does not affect
the core syntax this is $\alpha\,t : \Tm_{\e\Syn}\,\diamond\,\Bool$
and we learn from canonicity say that $\alpha\,t = \true$. Now
applying $\beta$ to such an equation we obtain $t = \beta\,(\alpha\,t)
= \beta\,\true = \true$ as $\beta$ also does not affect that core
calculus.
\end{proof}

\section{Future work}
\label{sec:future}

We presented a type theory with internal parametricity, a presheaf
model and a canonicity proof. It can be seen as a baby version of
higher observational type theory (HOTT). To obtain HOTT, we plan to
add the following additional features to our theory:
\begin{itemize}
\item a bridge type which can be seen as an indexed version of $\forall$,
\item Reedy fibrancy, which replaces spans by relations,
\item a strictification construction which turns the isomorphism for
  $\Pi$ types into a definitional equality (in case of bridge, we also
  need the same for $\Sigma$),
\item Kan fibrancy, which adds transport and turns the bridge type
  into a proper identity type. This would also change the
  correspondence between $\forall\U$ and spans into $\forall\U$ and
  equivalences.
\end{itemize}
We would also like to include general (higher) inductive and
coinductive types. Concerning the metatheory, we plan to use internal
language techniques
\cite{DBLP:phd/us/Sterling22,DBLP:conf/fscd/BocquetKS23} to obtain a
higher level canonicity proof, and extend it to normalisation.

Several of our constructions in this paper follow a generic pattern:
most of the global theory and gluing should be derivable from the
2-category in Figure \ref{fig:cubecat} similarly to the way it is done
for multimodal type theory \cite{DBLP:journals/corr/abs-2301-11842}.

\begin{acks}
We thank the anonymous reviewers for their comments and
suggestions. We thank Rafaël Bocquet, Hugo Herbelin, András Kovács
and Christian Sattler for discussions related to the topics of this
paper.

The first and third authors were supported by project no. \grantnum{hungary}{TKP2021-NVA-29} which has been implemented with the support provided by the \grantsponsor{hungary}{Ministry of Culture and Innovation of Hungary from the National Research, Development and Innovation Fund}{}, financed under the TKP2021-NVA funding scheme.

This material is based upon work supported by the \grantsponsor{airforce}{Air Force Office of Scientific Research}{} under award number \grantnum{airforce}{FA9550-21-1-0009}.
\end{acks}

\bibliography{b}


\begin{thebibliography}{34}


\ifx \showCODEN    \undefined \def \showCODEN     #1{\unskip}     \fi
\ifx \showDOI      \undefined \def \showDOI       #1{#1}\fi
\ifx \showISBNx    \undefined \def \showISBNx     #1{\unskip}     \fi
\ifx \showISBNxiii \undefined \def \showISBNxiii  #1{\unskip}     \fi
\ifx \showISSN     \undefined \def \showISSN      #1{\unskip}     \fi
\ifx \showLCCN     \undefined \def \showLCCN      #1{\unskip}     \fi
\ifx \shownote     \undefined \def \shownote      #1{#1}          \fi
\ifx \showarticletitle \undefined \def \showarticletitle #1{#1}   \fi
\ifx \showURL      \undefined \def \showURL       {\relax}        \fi
\providecommand\bibfield[2]{#2}
\providecommand\bibinfo[2]{#2}
\providecommand\natexlab[1]{#1}
\providecommand\showeprint[2][]{arXiv:#2}

\bibitem[Altenkirch et~al\mbox{.}(2016)]%
        {DBLP:conf/csl/AltenkirchCK16}
\bibfield{author}{\bibinfo{person}{Thorsten Altenkirch}, \bibinfo{person}{Paolo
  Capriotti}, {and} \bibinfo{person}{Nicolai Kraus}.}
  \bibinfo{year}{2016}\natexlab{}.
\newblock \showarticletitle{Extending Homotopy Type Theory with Strict
  Equality}. In \bibinfo{booktitle}{\emph{25th {EACSL} Annual Conference on
  Computer Science Logic, {CSL} 2016, August 29 - September 1, 2016, Marseille,
  France}} \emph{(\bibinfo{series}{LIPIcs}, Vol.~\bibinfo{volume}{62})},
  \bibfield{editor}{\bibinfo{person}{Jean{-}Marc Talbot} {and}
  \bibinfo{person}{Laurent Regnier}} (Eds.). \bibinfo{publisher}{Schloss
  Dagstuhl - Leibniz-Zentrum f{\"{u}}r Informatik},
  \bibinfo{pages}{21:1--21:17}.
\newblock
\urldef\tempurl%
\url{https://doi.org/10.4230/LIPIcs.CSL.2016.21}
\showDOI{\tempurl}


\bibitem[Altenkirch and Kaposi(2015)]%
        {DBLP:conf/types/AltenkirchK15}
\bibfield{author}{\bibinfo{person}{Thorsten Altenkirch} {and}
  \bibinfo{person}{Ambrus Kaposi}.} \bibinfo{year}{2015}\natexlab{}.
\newblock \showarticletitle{Towards a Cubical Type Theory without an Interval}.
  In \bibinfo{booktitle}{\emph{21st International Conference on Types for
  Proofs and Programs, {TYPES} 2015, May 18-21, 2015, Tallinn, Estonia}}
  \emph{(\bibinfo{series}{LIPIcs}, Vol.~\bibinfo{volume}{69})},
  \bibfield{editor}{\bibinfo{person}{Tarmo Uustalu}} (Ed.).
  \bibinfo{publisher}{Schloss Dagstuhl - Leibniz-Zentrum f{\"{u}}r Informatik},
  \bibinfo{pages}{3:1--3:27}.
\newblock
\urldef\tempurl%
\url{https://doi.org/10.4230/LIPIcs.TYPES.2015.3}
\showDOI{\tempurl}


\bibitem[Altenkirch and Kaposi(2016)]%
        {DBLP:conf/popl/AltenkirchK16}
\bibfield{author}{\bibinfo{person}{Thorsten Altenkirch} {and}
  \bibinfo{person}{Ambrus Kaposi}.} \bibinfo{year}{2016}\natexlab{}.
\newblock \showarticletitle{Type theory in type theory using quotient inductive
  types}. In \bibinfo{booktitle}{\emph{Proceedings of the 43rd Annual {ACM}
  {SIGPLAN-SIGACT} Symposium on Principles of Programming Languages, {POPL}
  2016, St. Petersburg, FL, USA, January 20 - 22, 2016}},
  \bibfield{editor}{\bibinfo{person}{Rastislav Bod{\'{\i}}k} {and}
  \bibinfo{person}{Rupak Majumdar}} (Eds.). \bibinfo{publisher}{{ACM}},
  \bibinfo{pages}{18--29}.
\newblock
\urldef\tempurl%
\url{https://doi.org/10.1145/2837614.2837638}
\showDOI{\tempurl}


\bibitem[Angiuli et~al\mbox{.}(2021)]%
        {DBLP:journals/mscs/AngiuliBCHHL21}
\bibfield{author}{\bibinfo{person}{Carlo Angiuli}, \bibinfo{person}{Guillaume
  Brunerie}, \bibinfo{person}{Thierry Coquand}, \bibinfo{person}{Robert
  Harper}, \bibinfo{person}{Kuen{-}Bang~Hou (Favonia)}, {and}
  \bibinfo{person}{Daniel~R. Licata}.} \bibinfo{year}{2021}\natexlab{}.
\newblock \showarticletitle{Syntax and models of Cartesian cubical type
  theory}.
\newblock \bibinfo{journal}{\emph{Math. Struct. Comput. Sci.}}
  \bibinfo{volume}{31}, \bibinfo{number}{4} (\bibinfo{year}{2021}),
  \bibinfo{pages}{424--468}.
\newblock
\urldef\tempurl%
\url{https://doi.org/10.1017/S0960129521000347}
\showDOI{\tempurl}


\bibitem[Annenkov et~al\mbox{.}(2017)]%
        {DBLP:journals/corr/AnnenkovCK17}
\bibfield{author}{\bibinfo{person}{Danil Annenkov}, \bibinfo{person}{Paolo
  Capriotti}, {and} \bibinfo{person}{Nicolai Kraus}.}
  \bibinfo{year}{2017}\natexlab{}.
\newblock \showarticletitle{Two-Level Type Theory and Applications}.
\newblock \bibinfo{journal}{\emph{CoRR}}  \bibinfo{volume}{abs/1705.03307}
  (\bibinfo{year}{2017}).
\newblock
\showeprint[arXiv]{1705.03307}
\urldef\tempurl%
\url{http://arxiv.org/abs/1705.03307}
\showURL{%
\tempurl}


\bibitem[Bernardy et~al\mbox{.}(2015)]%
        {DBLP:journals/entcs/BernardyCM15}
\bibfield{author}{\bibinfo{person}{Jean{-}Philippe Bernardy},
  \bibinfo{person}{Thierry Coquand}, {and} \bibinfo{person}{Guilhem Moulin}.}
  \bibinfo{year}{2015}\natexlab{}.
\newblock \showarticletitle{A Presheaf Model of Parametric Type Theory}. In
  \bibinfo{booktitle}{\emph{The 31st Conference on the Mathematical Foundations
  of Programming Semantics, {MFPS} 2015, Nijmegen, The Netherlands, June 22-25,
  2015}} \emph{(\bibinfo{series}{Electronic Notes in Theoretical Computer
  Science}, Vol.~\bibinfo{volume}{319})},
  \bibfield{editor}{\bibinfo{person}{Dan~R. Ghica}} (Ed.).
  \bibinfo{publisher}{Elsevier}, \bibinfo{pages}{67--82}.
\newblock
\urldef\tempurl%
\url{https://doi.org/10.1016/j.entcs.2015.12.006}
\showDOI{\tempurl}


\bibitem[Bernardy et~al\mbox{.}(2010)]%
        {DBLP:conf/icfp/BernardyJP10}
\bibfield{author}{\bibinfo{person}{Jean{-}Philippe Bernardy},
  \bibinfo{person}{Patrik Jansson}, {and} \bibinfo{person}{Ross Paterson}.}
  \bibinfo{year}{2010}\natexlab{}.
\newblock \showarticletitle{Parametricity and dependent types}. In
  \bibinfo{booktitle}{\emph{Proceeding of the 15th {ACM} {SIGPLAN}
  international conference on Functional programming, {ICFP} 2010, Baltimore,
  Maryland, USA, September 27-29, 2010}},
  \bibfield{editor}{\bibinfo{person}{Paul Hudak} {and}
  \bibinfo{person}{Stephanie Weirich}} (Eds.). \bibinfo{publisher}{{ACM}},
  \bibinfo{pages}{345--356}.
\newblock
\urldef\tempurl%
\url{https://doi.org/10.1145/1863543.1863592}
\showDOI{\tempurl}


\bibitem[Bernardy and Moulin(2012)]%
        {DBLP:conf/lics/BernardyM12}
\bibfield{author}{\bibinfo{person}{Jean{-}Philippe Bernardy} {and}
  \bibinfo{person}{Guilhem Moulin}.} \bibinfo{year}{2012}\natexlab{}.
\newblock \showarticletitle{A Computational Interpretation of Parametricity}.
  In \bibinfo{booktitle}{\emph{Proceedings of the 27th Annual {IEEE} Symposium
  on Logic in Computer Science, {LICS} 2012, Dubrovnik, Croatia, June 25-28,
  2012}}. \bibinfo{publisher}{{IEEE} Computer Society},
  \bibinfo{pages}{135--144}.
\newblock
\urldef\tempurl%
\url{https://doi.org/10.1109/LICS.2012.25}
\showDOI{\tempurl}


\bibitem[Bernardy and Moulin(2013)]%
        {DBLP:conf/icfp/BernardyM13}
\bibfield{author}{\bibinfo{person}{Jean{-}Philippe Bernardy} {and}
  \bibinfo{person}{Guilhem Moulin}.} \bibinfo{year}{2013}\natexlab{}.
\newblock \showarticletitle{Type-theory in color}. In
  \bibinfo{booktitle}{\emph{{ACM} {SIGPLAN} International Conference on
  Functional Programming, ICFP'13, Boston, MA, {USA} - September 25 - 27,
  2013}}, \bibfield{editor}{\bibinfo{person}{Greg Morrisett} {and}
  \bibinfo{person}{Tarmo Uustalu}} (Eds.). \bibinfo{publisher}{{ACM}},
  \bibinfo{pages}{61--72}.
\newblock
\urldef\tempurl%
\url{https://doi.org/10.1145/2500365.2500577}
\showDOI{\tempurl}


\bibitem[Bezem et~al\mbox{.}(2013)]%
        {DBLP:conf/types/BezemCH13}
\bibfield{author}{\bibinfo{person}{Marc Bezem}, \bibinfo{person}{Thierry
  Coquand}, {and} \bibinfo{person}{Simon Huber}.}
  \bibinfo{year}{2013}\natexlab{}.
\newblock \showarticletitle{A Model of Type Theory in Cubical Sets}. In
  \bibinfo{booktitle}{\emph{19th International Conference on Types for Proofs
  and Programs, {TYPES} 2013, April 22-26, 2013, Toulouse, France}}
  \emph{(\bibinfo{series}{LIPIcs}, Vol.~\bibinfo{volume}{26})},
  \bibfield{editor}{\bibinfo{person}{Ralph Matthes} {and}
  \bibinfo{person}{Aleksy Schubert}} (Eds.). \bibinfo{publisher}{Schloss
  Dagstuhl - Leibniz-Zentrum f{\"{u}}r Informatik}, \bibinfo{pages}{107--128}.
\newblock
\urldef\tempurl%
\url{https://doi.org/10.4230/LIPIcs.TYPES.2013.107}
\showDOI{\tempurl}


\bibitem[Bocquet et~al\mbox{.}(2023)]%
        {DBLP:conf/fscd/BocquetKS23}
\bibfield{author}{\bibinfo{person}{Rafa{\"{e}}l Bocquet},
  \bibinfo{person}{Ambrus Kaposi}, {and} \bibinfo{person}{Christian Sattler}.}
  \bibinfo{year}{2023}\natexlab{}.
\newblock \showarticletitle{For the Metatheory of Type Theory, Internal Sconing
  Is Enough}. In \bibinfo{booktitle}{\emph{8th International Conference on
  Formal Structures for Computation and Deduction, {FSCD} 2023, July 3-6, 2023,
  Rome, Italy}} \emph{(\bibinfo{series}{LIPIcs}, Vol.~\bibinfo{volume}{260})},
  \bibfield{editor}{\bibinfo{person}{Marco Gaboardi} {and}
  \bibinfo{person}{Femke van Raamsdonk}} (Eds.). \bibinfo{publisher}{Schloss
  Dagstuhl - Leibniz-Zentrum f{\"{u}}r Informatik},
  \bibinfo{pages}{18:1--18:23}.
\newblock
\urldef\tempurl%
\url{https://doi.org/10.4230/LIPIcs.FSCD.2023.18}
\showDOI{\tempurl}


\bibitem[Booij et~al\mbox{.}(2016)]%
        {DBLP:conf/types/BooijELS16}
\bibfield{author}{\bibinfo{person}{Auke~Bart Booij},
  \bibinfo{person}{Mart{\'{\i}}n~H{\"{o}}tzel Escard{\'{o}}},
  \bibinfo{person}{Peter~LeFanu Lumsdaine}, {and} \bibinfo{person}{Michael
  Shulman}.} \bibinfo{year}{2016}\natexlab{}.
\newblock \showarticletitle{Parametricity, Automorphisms of the Universe, and
  Excluded Middle}. In \bibinfo{booktitle}{\emph{22nd International Conference
  on Types for Proofs and Programs, {TYPES} 2016, May 23-26, 2016, Novi Sad,
  Serbia}} \emph{(\bibinfo{series}{LIPIcs}, Vol.~\bibinfo{volume}{97})},
  \bibfield{editor}{\bibinfo{person}{Silvia Ghilezan}, \bibinfo{person}{Herman
  Geuvers}, {and} \bibinfo{person}{Jelena Ivetic}} (Eds.).
  \bibinfo{publisher}{Schloss Dagstuhl - Leibniz-Zentrum f{\"{u}}r Informatik},
  \bibinfo{pages}{7:1--7:14}.
\newblock
\urldef\tempurl%
\url{https://doi.org/10.4230/LIPICS.TYPES.2016.7}
\showDOI{\tempurl}


\bibitem[Boulier et~al\mbox{.}(2017)]%
        {DBLP:conf/cpp/BoulierPT17}
\bibfield{author}{\bibinfo{person}{Simon Boulier},
  \bibinfo{person}{Pierre{-}Marie P{\'{e}}drot}, {and} \bibinfo{person}{Nicolas
  Tabareau}.} \bibinfo{year}{2017}\natexlab{}.
\newblock \showarticletitle{The next 700 syntactical models of type theory}. In
  \bibinfo{booktitle}{\emph{Proceedings of the 6th {ACM} {SIGPLAN} Conference
  on Certified Programs and Proofs, {CPP} 2017, Paris, France, January 16-17,
  2017}}, \bibfield{editor}{\bibinfo{person}{Yves Bertot} {and}
  \bibinfo{person}{Viktor Vafeiadis}} (Eds.). \bibinfo{publisher}{{ACM}},
  \bibinfo{pages}{182--194}.
\newblock
\urldef\tempurl%
\url{https://doi.org/10.1145/3018610.3018620}
\showDOI{\tempurl}


\bibitem[Buchholtz and Morehouse(2017)]%
        {DBLP:conf/RelMiCS/BuchholtzM17}
\bibfield{author}{\bibinfo{person}{Ulrik Buchholtz} {and}
  \bibinfo{person}{Edward Morehouse}.} \bibinfo{year}{2017}\natexlab{}.
\newblock \showarticletitle{Varieties of Cubical Sets}. In
  \bibinfo{booktitle}{\emph{Relational and Algebraic Methods in Computer
  Science - 16th International Conference, RAMiCS 2017, Lyon, France, May
  15-18, 2017, Proceedings}} \emph{(\bibinfo{series}{Lecture Notes in Computer
  Science}, Vol.~\bibinfo{volume}{10226})},
  \bibfield{editor}{\bibinfo{person}{Peter H{\"{o}}fner},
  \bibinfo{person}{Damien Pous}, {and} \bibinfo{person}{Georg Struth}} (Eds.).
  \bibinfo{pages}{77--92}.
\newblock
\urldef\tempurl%
\url{https://doi.org/10.1007/978-3-319-57418-9\_5}
\showDOI{\tempurl}


\bibitem[Castellan et~al\mbox{.}(2019)]%
        {DBLP:journals/corr/abs-1904-00827}
\bibfield{author}{\bibinfo{person}{Simon Castellan}, \bibinfo{person}{Pierre
  Clairambault}, {and} \bibinfo{person}{Peter Dybjer}.}
  \bibinfo{year}{2019}\natexlab{}.
\newblock \showarticletitle{Categories with Families: Unityped, Simply Typed,
  and Dependently Typed}.
\newblock \bibinfo{journal}{\emph{CoRR}}  \bibinfo{volume}{abs/1904.00827}
  (\bibinfo{year}{2019}).
\newblock
\showeprint[arXiv]{1904.00827}
\urldef\tempurl%
\url{http://arxiv.org/abs/1904.00827}
\showURL{%
\tempurl}


\bibitem[Cavallo(2021)]%
        {DBLP:phd/us/Cavallo21}
\bibfield{author}{\bibinfo{person}{Evan Cavallo}.}
  \bibinfo{year}{2021}\natexlab{}.
\newblock \emph{\bibinfo{title}{Higher Inductive Types and Internal
  Parametricity for Cubical Type Theory}}.
\newblock \bibinfo{thesistype}{Ph.\,D. Dissertation}. \bibinfo{school}{Carnegie
  Mellon University, {USA}}.
\newblock
\urldef\tempurl%
\url{https://doi.org/10.1184/r1/14555691}
\showDOI{\tempurl}


\bibitem[Cavallo and Harper(2021)]%
        {DBLP:journals/lmcs/CavalloH21}
\bibfield{author}{\bibinfo{person}{Evan Cavallo} {and} \bibinfo{person}{Robert
  Harper}.} \bibinfo{year}{2021}\natexlab{}.
\newblock \showarticletitle{Internal Parametricity for Cubical Type Theory}.
\newblock \bibinfo{journal}{\emph{Log. Methods Comput. Sci.}}
  \bibinfo{volume}{17}, \bibinfo{number}{4} (\bibinfo{year}{2021}).
\newblock
\urldef\tempurl%
\url{https://doi.org/10.46298/lmcs-17(4:5)2021}
\showDOI{\tempurl}


\bibitem[Cohen et~al\mbox{.}(2015)]%
        {DBLP:conf/types/CohenCHM15}
\bibfield{author}{\bibinfo{person}{Cyril Cohen}, \bibinfo{person}{Thierry
  Coquand}, \bibinfo{person}{Simon Huber}, {and} \bibinfo{person}{Anders
  M{\"{o}}rtberg}.} \bibinfo{year}{2015}\natexlab{}.
\newblock \showarticletitle{Cubical Type Theory: {A} Constructive
  Interpretation of the Univalence Axiom}. In \bibinfo{booktitle}{\emph{21st
  International Conference on Types for Proofs and Programs, {TYPES} 2015, May
  18-21, 2015, Tallinn, Estonia}} \emph{(\bibinfo{series}{LIPIcs},
  Vol.~\bibinfo{volume}{69})}, \bibfield{editor}{\bibinfo{person}{Tarmo
  Uustalu}} (Ed.). \bibinfo{publisher}{Schloss Dagstuhl - Leibniz-Zentrum
  f{\"{u}}r Informatik}, \bibinfo{pages}{5:1--5:34}.
\newblock
\urldef\tempurl%
\url{https://doi.org/10.4230/LIPICS.TYPES.2015.5}
\showDOI{\tempurl}


\bibitem[Coquand(2018)]%
        {coquandUniverse}
\bibfield{author}{\bibinfo{person}{Thierry Coquand}.}
  \bibinfo{year}{2018}\natexlab{}.
\newblock \bibinfo{title}{Presheaf model of type theory}.
  (\bibinfo{year}{2018}).
\newblock
\newblock
\shownote{\url{https://www.cse.chalmers.se/~coquand/presheaf.pdf}}.


\bibitem[Gratzer(2023)]%
        {DBLP:journals/corr/abs-2301-11842}
\bibfield{author}{\bibinfo{person}{Daniel Gratzer}.}
  \bibinfo{year}{2023}\natexlab{}.
\newblock \showarticletitle{Normalization for multimodal type theory}.
\newblock \bibinfo{journal}{\emph{CoRR}}  \bibinfo{volume}{abs/2301.11842}
  (\bibinfo{year}{2023}).
\newblock
\urldef\tempurl%
\url{https://doi.org/10.48550/arXiv.2301.11842}
\showDOI{\tempurl}
\showeprint[arXiv]{2301.11842}


\bibitem[Gratzer et~al\mbox{.}(2021)]%
        {DBLP:journals/lmcs/GratzerKNB21}
\bibfield{author}{\bibinfo{person}{Daniel Gratzer}, \bibinfo{person}{G.~A.
  Kavvos}, \bibinfo{person}{Andreas Nuyts}, {and} \bibinfo{person}{Lars
  Birkedal}.} \bibinfo{year}{2021}\natexlab{}.
\newblock \showarticletitle{Multimodal Dependent Type Theory}.
\newblock \bibinfo{journal}{\emph{Log. Methods Comput. Sci.}}
  \bibinfo{volume}{17}, \bibinfo{number}{3} (\bibinfo{year}{2021}).
\newblock
\urldef\tempurl%
\url{https://doi.org/10.46298/lmcs-17(3:11)2021}
\showDOI{\tempurl}


\bibitem[Hofmann(1997)]%
        {Hofmann97syntaxand}
\bibfield{author}{\bibinfo{person}{Martin Hofmann}.}
  \bibinfo{year}{1997}\natexlab{}.
\newblock \showarticletitle{Syntax and Semantics of Dependent Types}. In
  \bibinfo{booktitle}{\emph{Semantics and Logics of Computation}}.
  \bibinfo{publisher}{Cambridge University Press}, \bibinfo{pages}{79--130}.
\newblock


\bibitem[Kaposi et~al\mbox{.}(2019a)]%
        {DBLP:conf/rta/KaposiHS19}
\bibfield{author}{\bibinfo{person}{Ambrus Kaposi}, \bibinfo{person}{Simon
  Huber}, {and} \bibinfo{person}{Christian Sattler}.}
  \bibinfo{year}{2019}\natexlab{a}.
\newblock \showarticletitle{Gluing for Type Theory}. In
  \bibinfo{booktitle}{\emph{4th International Conference on Formal Structures
  for Computation and Deduction, {FSCD} 2019, June 24-30, 2019, Dortmund,
  Germany}} \emph{(\bibinfo{series}{LIPIcs}, Vol.~\bibinfo{volume}{131})},
  \bibfield{editor}{\bibinfo{person}{Herman Geuvers}} (Ed.).
  \bibinfo{publisher}{Schloss Dagstuhl - Leibniz-Zentrum f{\"{u}}r Informatik},
  \bibinfo{pages}{25:1--25:19}.
\newblock
\urldef\tempurl%
\url{https://doi.org/10.4230/LIPIcs.FSCD.2019.25}
\showDOI{\tempurl}


\bibitem[Kaposi et~al\mbox{.}(2019b)]%
        {DBLP:journals/pacmpl/KaposiKA19}
\bibfield{author}{\bibinfo{person}{Ambrus Kaposi},
  \bibinfo{person}{Andr{\'{a}}s Kov{\'{a}}cs}, {and} \bibinfo{person}{Thorsten
  Altenkirch}.} \bibinfo{year}{2019}\natexlab{b}.
\newblock \showarticletitle{Constructing quotient inductive-inductive types}.
\newblock \bibinfo{journal}{\emph{Proc. {ACM} Program. Lang.}}
  \bibinfo{volume}{3}, \bibinfo{number}{{POPL}} (\bibinfo{year}{2019}),
  \bibinfo{pages}{2:1--2:24}.
\newblock
\urldef\tempurl%
\url{https://doi.org/10.1145/3290315}
\showDOI{\tempurl}


\bibitem[Kov{\'{a}}cs and Kaposi(2020)]%
        {DBLP:conf/lics/KovacsK20}
\bibfield{author}{\bibinfo{person}{Andr{\'{a}}s Kov{\'{a}}cs} {and}
  \bibinfo{person}{Ambrus Kaposi}.} \bibinfo{year}{2020}\natexlab{}.
\newblock \showarticletitle{Large and Infinitary Quotient Inductive-Inductive
  Types}. In \bibinfo{booktitle}{\emph{{LICS} '20: 35th Annual {ACM/IEEE}
  Symposium on Logic in Computer Science, Saarbr{\"{u}}cken, Germany, July
  8-11, 2020}}, \bibfield{editor}{\bibinfo{person}{Holger Hermanns},
  \bibinfo{person}{Lijun Zhang}, \bibinfo{person}{Naoki Kobayashi}, {and}
  \bibinfo{person}{Dale Miller}} (Eds.). \bibinfo{publisher}{{ACM}},
  \bibinfo{pages}{648--661}.
\newblock
\urldef\tempurl%
\url{https://doi.org/10.1145/3373718.3394770}
\showDOI{\tempurl}


\bibitem[Kraus and Sattler(2017)]%
        {DBLP:journals/corr/KrausS17}
\bibfield{author}{\bibinfo{person}{Nicolai Kraus} {and}
  \bibinfo{person}{Christian Sattler}.} \bibinfo{year}{2017}\natexlab{}.
\newblock \showarticletitle{Space-Valued Diagrams, Type-Theoretically (Extended
  Abstract)}.
\newblock \bibinfo{journal}{\emph{CoRR}}  \bibinfo{volume}{abs/1704.04543}
  (\bibinfo{year}{2017}).
\newblock
\showeprint[arXiv]{1704.04543}
\urldef\tempurl%
\url{http://arxiv.org/abs/1704.04543}
\showURL{%
\tempurl}


\bibitem[Muylder et~al\mbox{.}(2024)]%
        {new}
\bibfield{author}{\bibinfo{person}{Antoine~Van Muylder},
  \bibinfo{person}{Andreas Nuyts}, {and} \bibinfo{person}{Dominique Devriese}.}
  \bibinfo{year}{2024}\natexlab{}.
\newblock \showarticletitle{Internal and Observational Parametricity for
  Cubical Agda}. In \bibinfo{booktitle}{\emph{To appear in: Proceedings of the
  51st Annual {ACM} {SIGPLAN-SIGACT} Symposium on Principles of Programming
  Languages, {POPL} 2024, London, UK, January 17 - 19, 2024}}.
  \bibinfo{publisher}{{ACM}}.
\newblock


\bibitem[Nuyts and Devriese(2018)]%
        {DBLP:conf/lics/NuytsD18}
\bibfield{author}{\bibinfo{person}{Andreas Nuyts} {and}
  \bibinfo{person}{Dominique Devriese}.} \bibinfo{year}{2018}\natexlab{}.
\newblock \showarticletitle{Degrees of Relatedness: {A} Unified Framework for
  Parametricity, Irrelevance, Ad Hoc Polymorphism, Intersections, Unions and
  Algebra in Dependent Type Theory}. In \bibinfo{booktitle}{\emph{Proceedings
  of the 33rd Annual {ACM/IEEE} Symposium on Logic in Computer Science, {LICS}
  2018, Oxford, UK, July 09-12, 2018}}, \bibfield{editor}{\bibinfo{person}{Anuj
  Dawar} {and} \bibinfo{person}{Erich Gr{\"{a}}del}} (Eds.).
  \bibinfo{publisher}{{ACM}}, \bibinfo{pages}{779--788}.
\newblock
\urldef\tempurl%
\url{https://doi.org/10.1145/3209108.3209119}
\showDOI{\tempurl}


\bibitem[Nuyts et~al\mbox{.}(2017)]%
        {DBLP:journals/pacmpl/NuytsVD17}
\bibfield{author}{\bibinfo{person}{Andreas Nuyts}, \bibinfo{person}{Andrea
  Vezzosi}, {and} \bibinfo{person}{Dominique Devriese}.}
  \bibinfo{year}{2017}\natexlab{}.
\newblock \showarticletitle{Parametric quantifiers for dependent type theory}.
\newblock \bibinfo{journal}{\emph{Proc. {ACM} Program. Lang.}}
  \bibinfo{volume}{1}, \bibinfo{number}{{ICFP}} (\bibinfo{year}{2017}),
  \bibinfo{pages}{32:1--32:29}.
\newblock
\urldef\tempurl%
\url{https://doi.org/10.1145/3110276}
\showDOI{\tempurl}


\bibitem[Reynolds(1983)]%
        {DBLP:conf/ifip/Reynolds83}
\bibfield{author}{\bibinfo{person}{John~C. Reynolds}.}
  \bibinfo{year}{1983}\natexlab{}.
\newblock \showarticletitle{Types, Abstraction and Parametric Polymorphism}. In
  \bibinfo{booktitle}{\emph{Information Processing 83, Proceedings of the
  {IFIP} 9th World Computer Congress, Paris, France, September 19-23, 1983}},
  \bibfield{editor}{\bibinfo{person}{R.~E.~A. Mason}} (Ed.).
  \bibinfo{publisher}{North-Holland/IFIP}, \bibinfo{pages}{513--523}.
\newblock


\bibitem[Sterling(2022)]%
        {DBLP:phd/us/Sterling22}
\bibfield{author}{\bibinfo{person}{Jonathan Sterling}.}
  \bibinfo{year}{2022}\natexlab{}.
\newblock \emph{\bibinfo{title}{First Steps in Synthetic Tait Computability:
  The Objective Metatheory of Cubical Type Theory}}.
\newblock \bibinfo{thesistype}{Ph.\,D. Dissertation}. \bibinfo{school}{Carnegie
  Mellon University, {USA}}.
\newblock
\urldef\tempurl%
\url{https://doi.org/10.1184/r1/19632681.v1}
\showDOI{\tempurl}


\bibitem[Uemura(2019)]%
        {DBLP:journals/corr/abs-1904-04097}
\bibfield{author}{\bibinfo{person}{Taichi Uemura}.}
  \bibinfo{year}{2019}\natexlab{}.
\newblock \showarticletitle{A General Framework for the Semantics of Type
  Theory}.
\newblock \bibinfo{journal}{\emph{CoRR}}  \bibinfo{volume}{abs/1904.04097}
  (\bibinfo{year}{2019}).
\newblock
\showeprint[arXiv]{1904.04097}
\urldef\tempurl%
\url{http://arxiv.org/abs/1904.04097}
\showURL{%
\tempurl}


\bibitem[Vezzosi et~al\mbox{.}(2021)]%
        {DBLP:journals/jfp/VezzosiMA21}
\bibfield{author}{\bibinfo{person}{Andrea Vezzosi}, \bibinfo{person}{Anders
  M{\"{o}}rtberg}, {and} \bibinfo{person}{Andreas Abel}.}
  \bibinfo{year}{2021}\natexlab{}.
\newblock \showarticletitle{Cubical Agda: {A} dependently typed programming
  language with univalence and higher inductive types}.
\newblock \bibinfo{journal}{\emph{J. Funct. Program.}}  \bibinfo{volume}{31}
  (\bibinfo{year}{2021}), \bibinfo{pages}{e8}.
\newblock
\urldef\tempurl%
\url{https://doi.org/10.1017/S0956796821000034}
\showDOI{\tempurl}


\bibitem[Wadler(1990)]%
        {wadler}
\bibfield{author}{\bibinfo{person}{Philip Wadler}.}
  \bibinfo{year}{1990}\natexlab{}.
\newblock \bibinfo{title}{Recursive types for free!}  (\bibinfo{year}{1990}).
\newblock
\newblock
\shownote{\url{https://homepages.inf.ed.ac.uk/wadler/papers/free-rectypes/free-rectypes.txt}}.


\end{thebibliography}

\end{document}